\newcommand{\detideal}[3]{I^{\mathrm{det}}_{#1,#2,#3}}
\DeclareMathOperator{\Coeff}{Coeff}
\DeclareMathOperator{\trank}{R}
\DeclareMathOperator{\brank}{\underline{R}}
\DeclareMathOperator{\bmult}{\underline{L}}
\title{On Matrix Multiplication and Polynomial Identity Testing}
\author{Robert Andrews\thanks{Department of Computer Science, University of Illinois Urbana-Champaign. Email: \texttt{rgandre2@illinois.edu}. Supported by NSF grant CAREER 20-47310.}}
\date{August 1, 2022}
\begin{document}

\maketitle

\begin{abstract}
	We show that lower bounds on the border rank of matrix multiplication can be used to non-trivially derandomize polynomial identity testing for small algebraic circuits.
	Letting $\brank(n)$ denote the border rank of $n \times n \times n$ matrix multiplication, we construct a hitting set generator with seed length $O(\sqrt{n} \cdot \brank^{-1}(s))$ that hits $n$-variate circuits of multiplicative complexity $s$.
	If the matrix multiplication exponent $\omega$ is not 2, our generator has seed length $O(n^{1 - \eps})$ and hits circuits of size $O(n^{1 + \delta})$ for sufficiently small $\eps, \delta > 0$.
	Surprisingly, the fact that $\brank(n) \ge n^2$ already yields new, non-trivial hitting set generators for circuits of sublinear multiplicative complexity.
\end{abstract}

\section{Introduction}

Matrix multiplication is a fundamental algorithmic problem in theoretical computer science.
Starting with the work of \textcite{Strassen69}, who gave an algorithm to multiply two $n \times n$ matrices in $O(n^{\log_2 7})$ time, a long line of work \cite{Pan78, BCRL79, Pan80, Schonhage81, Rom82, CW82, Strassen87, CW90, DS13, Vas12, LG14, CU03, CKSU05, CU13} has produced faster algorithms to multiply matrices.
Progress on this task is usually measured by $\omega$, the exponent of matrix multiplication, which is the smallest real number such that matrix multiplication can be performed using $O(n^{\omega + \eps})$ arithmetic operations for any positive constant $\eps > 0$.
It is evident that $2 \le \omega \le 3$.
Strassen's \cite{Strassen69} result can be rephrased as a proof that $\omega \le \log_2 7$.
The present state-of-the-art algorithm for matrix multiplication is due to \textcite{AV21}, who proved $\omega < 2.37286$.
It is a major open question to determine whether or not $\omega = 2$.

The complexity of matrix multiplication governs (and in many cases, is equivalent to) the complexity of numerous problems in linear algebra, including computing the determinant and solving systems of linear equations \cite{Strassen69}, boolean matrix multiplication \cite{FM71}, QR decomposition \cite{Schonhage73}, LUP decomposition \cite{BH74}, and computing the coefficients of the characteristic polynomial of a matrix \cite{KG85}.
Fast matrix multiplication has also been used to design algorithms for a host of problems in other areas; examples include recognizing context-free languages \cite{Valiant75}, detecting $k$-cliques \cite{NP85}, and solving linear programs \cite{CLS21,Brand20,JSWZ21}.

While it is popularly conjectured that $\omega = 2$, progress on obtaining improved upper bounds on $\omega$ has slowed over time.
In the three decades since \textcite{CW90} showed $\omega < 2.3755$, the best-known bound on $\omega$ has improved by only $\approx 0.00264$.
The improvements obtained since then \cite{DS13,Vas12,LG14,AV21} apply Strassen's so-called laser method \cite{Strassen87} to powers of the Coppersmith--Winograd tensor.
Recent work \cite{AFG15,BCCGNSU17,BCCGU17,AV18a,AV18b,Alman21,CVZ19,CGLZ20} has shown that this slow progress is no coincidence: there are unconditional barriers to obtaining improved bounds on $\omega$ using generalizations of this and related techniques.

There is a dual line of work concerned with proving lower bounds on the complexity of matrix multiplication.
This usually proceeds by proving lower bounds on the rank or border rank of matrix multiplication, which essentially correspond to the number of multiplications one needs to perform in order to compute a matrix product.
It is known that $\omega = 2$ if and only if the rank (or border rank) of matrix multiplication is bounded from above by $n^{2 + o(1)}$.
The best-known lower bound on the rank of matrix multiplication is $\frac{5}{2}n^2 - 3n$ by \textcite{Blaser99}, with an improvement over finite fields due to \textcite{Shpilka03}.
For border rank, an approximate version of rank, the current record is a lower bound of $2 n^2 - \log_2(n) - 1$, due to \textcite{LM18}.
In a somewhat different vein, \textcite{Raz03} showed that any bounded-coefficient circuit computing $n \times n \times n$ matrix multiplication must be of size $\Omega(n^2 \log n)$.

Naturally, if $\omega = 2$, one obtains extremely fast algorithms for matrix multiplication, leading to improved algorithms for a variety of problems.
However, it is not clear if there is a useful algorithmic consequence of the hypothesis $\omega > 2$.
The main contribution of this work is an application of the assumption $\omega > 2$ to the design of algorithms.
Specifically, we show that if $\omega > 2$, then one can non-trivially derandomize polynomial identity testing for small circuits.

Polynomial identity testing (PIT) is the problem of testing whether an algebraic circuit computes the zero polynomial.
There is a simple, fast randomized algorithm for PIT \cite{Schwartz80,Zippel79}, but no non-trivial deterministic algorithm is known.
Designing a deterministic polynomial-time algorithm for PIT is a major goal of algebraic complexity.
Typically, this is done by constructing a hitting set generator, the analogue of a pseudorandom generator in this setting.
There has been considerable success in derandomizing PIT for restricted classes of circuits (see, e.g., \textcite{SY10,Saxena09,Saxena14}).
For strong models of computation, like formulas and circuits, only conditional results in the form of hardness-to-pseudorandomness results are known \cite{KI04,DSY09,CKS19a,GKSS22,Andrews20,AF22} (see also \cite{KS19}).

\subsection{Our Results} \label{subsec:results}

We now describe our result in more detail.
We construct a hitting set generator for algebraic circuits that have a small number of multiplication gates.

\begin{theorem*}[see \autoref{thm:hsg}]
	Let $\brank(n)$ denote the border rank of $n \times n \times n$ matrix multiplication.
	There is an explicit hitting set generator of seed length $O(\sqrt{n} \brank^{-1}(s))$ that hits $n$-variate circuits with $s$ multiplication gates.
\end{theorem*}

In terms of the matrix multiplication exponent $\omega$, our generator has seed length $O(\sqrt{n} s^{1/\omega})$.
Thus, if $\omega > 2$, we obtain a generator of seed length $O(n^{1 - \eps})$ that hits circuits of size $O(n^{1 + \delta})$ for sufficiently small $\eps, \delta > 0$.
Alternatively, one can phrase this as a win-win result: either $\omega = 2$, giving us fast algorithms for a large collection of problems; or $\omega > 2$, in which case we obtain a non-trivial deterministic algorithm for testing identities given by small circuits.

As $\brank(n) \ge (2 - o(1))n^2$ \cite{LM18}, this also yields an unconditional construction of a hitting set generator with seed length $O(\sqrt{ns})$, which is non-trivial as long as $s \le \eps n$ for sufficiently small $\eps > 0$.
It may seem strange to consider circuits of complexity much less than $n$; for many circuit classes, such circuits are not even capable of reading their entire input.
However, circuits with few multiplication gates are capable of computing non-trivial polynomials, mainly through the use of repeated squaring.
For example, the polynomial $(x_1 + \cdots + x_n)^d$ can be computed using only $O(\log d)$ multiplication gates.

To the best of our knowledge, nothing is known about derandomizing PIT for circuits with few product gates.
For very small $s$, one can obtain non-trivial algorithms by bounding the sparsity of the computed polynomial and using the Klivans--Spielman generator \cite{KS01}.
This strategy breaks down when $s \ge \Omega(\log n)$, as the resulting sparsity bound becomes too large.
In contrast, our construction gives a non-trivial algorithm even when $s = \eps n$ for $\eps < \frac{1}{192}$.

Our result comes within a logarithmic factor of converting all known unconditional hardness for algebraic circuits into pseudorandomness.
The state-of-the-art in explicit lower bounds on multiplicative complexity dates back to \textcite{BS83}, who showed that the polynomial $x_1^d + \cdots + x_n^d$ requires $\Omega(n \log d)$ multiplications to compute.
If one could construct an explicit generator whose seed length remains non-trivial for multiplicative complexity $s \ge \omega(n \log n)$, then this would provide an explicit family of $n$-variate multilinear polynomials of multiplicative complexity $\omega(n \log n)$.
Of course, it remains a possibility that our generator could be improved to hit circuits of multiplicative complexity $O(n \log n)$ without requiring a breakthrough in circuit lower bounds.

As mentioned earlier, there is a collection of works on the hardness-randomness phenomenon in algebraic complexity \cite{KI04,DSY09,CKS19a,GKSS22,Andrews20,AF22}.
Because the assumption $\omega > 2$ is inherently a circuit lower bound, it seems reasonable to expect that one could instantiate the hardness-randomness connection in order to directly obtain our result.
Though this is the spirit of our approach, we remark that the known hardness-randomness framework typically incurs some polynomial overhead in translating a circuit lower bound into a hitting set generator.
In particular, for a weak lower bound of the form $\Omega(n^{1 + \eps})$ (like what is implied by $\omega > 2$), these techniques fail to imply any kind of non-trivial derandomization.
We note that work by \textcite{DST21} showed that for a particular class of constant-variate circuits, weak lower bounds can in fact be used to derandomize PIT.
For more on algebraic hardness versus randomness, see the survey of \textcite{KS19}.

\subsection{Our Techniques}

We briefly describe our generator and the proof of its correctness.
Throughout, we consider $n$-variate circuits as taking as input a matrix $X$ of size $\sqrt{n} \times \sqrt{n}$.
Let $\brank(n)$ be the border rank of $n \times n \times n$ matrix multiplication.
To construct our generator, we will show that the set of matrices of rank $O(\brank^{-1}(s))$ are a hitting set for circuits with $s$ multiplication gates.
This will imply that such a circuit cannot vanish on the product of an $\sqrt{n} \times O(\brank^{-1}(s))$ matrix and an $O(\brank^{-1}(s)) \times \sqrt{n}$ matrix, which yields our generator.
Thus, we are faced with the task of showing that no small circuit can vanish on the set of all matrices of rank $O(\brank^{-1}(s))$.

Let $r \in \naturals$ and let $I_r \subseteq \F[X]$ be the ideal of $\F[X]$ generated by the $r \times r$ minors of the matrix $X$.
It is well-known that when the field $\F$ is algebraically closed, the ideal $I_r$ consists exactly of those polynomials that vanish on matrices of rank less than $r$.
Rephrasing our goal, we need to prove a lower bound of $\Omega(\brank(r))$ on the number of multiplication gates needed to compute any nonzero polynomial in the ideal $I_r$.

Using an observation due to \textcite[Corollary 6]{BS83}, a lower bound on the border rank of matrix multiplication lifts to a lower bound on the border multiplicative complexity of the polynomial $\tr(XYZ)$, where $X$, $Y$, and $Z$ are $n \times n$ matrices.
Results of \textcite{AF22} allow us to further lift this lower bound to the ideal $I_r$ where $r$ is the size of the smallest algebraic branching program computing $\tr(XYZ)$.
Because $\tr(XYZ)$ can be computed by an algebraic branching program with $O(n^2)$ vertices, we obtain a lower bound of $\Omega(\brank(\sqrt{r}))$ on the multiplicative complexity of $I_r$.
This suffices to obtain a hitting set generator of seed length $O(\sqrt{n} \brank^{-1}(s^2))$ for circuits with $s$ product gates.
However, such a construction cannot hope to obtain seed length $o(n)$ for circuits with $O(n^{0.6})$ product gates, even if the best-known upper bound on $\omega$ is tight.

To improve the dependence on $s$ in the seed length, we instead lift the lower bound to the ideal $I_r$ where $r$ is the size of the smallest \emph{trace} algebraic branching program that computes $\tr(XYZ)$.
This polynomial can naturally be computed by a trace ABP of size $O(n)$, which leads to the improved lower bound of $\Omega(\brank(r))$ on the multiplicative complexity of $I_r$.
This immediately translates into the improved seed length of $O(\sqrt{n} \brank^{-1}(s))$ for our hitting set generator.

To perform this improved lifting step, we essentially need to show that trace ABPs of size $s$ can be expressed as a determinant of size $O(s)$.
We do this using the interpretation of the determinant as a sum of weighted cycle covers in an ABP, following \textcite{Valiant79}.

\section{Preliminaries}

Throughout this work, we take $\F$ to be a field of characteristic zero.
For $n \in \naturals$ a natural number, we write $[n] \coloneqq \set{1,2,\ldots,n}$.
We denote by $\vec{x} = (x_1,\ldots,x_n)$ and $X = (x_{i,j})_{i \in [n], j \in [m]}$ a vector of variables and an $n \times m$ matrix of variables, respectively.
We write $\F[\vec{x}]$ for the polynomial ring in the variables $\vec{x}$.
We use $\detideal{n}{m}{r}$ to denote the ideal of $\F[X]$ generated by the $r \times r$ minors of a matrix of variables $X$.
For an $n \times m$ matrix $A$ and subsets $R \subseteq [n]$ and $C \subseteq [m]$, we write $A_{R,C}$ for the submatrix of $A$ obtained by selecting the rows indexed by $R$ and the columns indexed by $C$.

\subsection{Algebraic Circuits}

We briefly recall the notions of algebraic circuits, algebraic branching programs, and trace algebraic branching programs.
For a more thorough treatment of algebraic circuit complexity, we refer the reader to \textcite{SY10} and \textcite{Saptharishi19}.
We begin with the definition of an algebraic circuit.

\begin{definition}
	An \emph{algebraic circuit} is a directed acyclic graph in which every vertex has in-degree zero or two.
	Vertices of in-degree zero are called input gates and are labeled by either a field constant or a variable $x_{i,j}$.
	Vertices of in-degree two are called internal gates and are labeled either as addition or multiplication gates.
	The gates of the circuit compute polynomials in $\F[X]$ in the natural way.
	We allow each edge $e$ of the circuit to be labeled by a field constant $\alpha_e \in \F$, which has the effect of multiplying the value carried by that edge by $\alpha_e$.
	We measure the \emph{size} of a circuit by the number of gates appearing in the circuit.
	The \emph{multiplicative complexity} of a circuit is the number of multiplication gates appearing in the circuit.
\end{definition}

We will also require the notions of algebraic branching programs (ABPs) and trace algebraic branching programs (trace ABPs).

\begin{definition}
	A \emph{(single-source, single-sink) algebraic branching program (ABP)} is a layered directed acyclic graph $G = (V,E)$ with a single source vertex $s$ and a single sink vertex $t$.
	By layered, we mean that there is a partition $V = V_0 \sqcup V_1 \sqcup \cdots \sqcup V_d$ such that $V_0 = \set{s}$, $V_d = \set{t}$, and every edge in $G$ goes from layer $V_{i-1}$ to $V_{i}$ for some $i \in [d]$.
	Every edge $e$ of $G$ is labeled by a linear polynomial $\ell_e(\vec{x}) \in \F[\vec{x}]$.
	Let $\mathcal{P}_{s,t}$ be the set of $s$-$t$ paths in $G$.
	The ABP computes the polynomial given by
	\[
		\sum_{P \in \mathcal{P}_{s,t}} \prod_{e \in P} \ell_e(\vec{x}).
	\]
	The \emph{size} of the ABP is $|V|$, the number of vertices in $G$.
	The \emph{width} of the ABP is $\max_{i \in [d]} |V_i|$.
	
	Equivalently, an ABP is given by a collection of matrices $M_1(\vec{x}),\ldots,M_d(\vec{x})$ whose entries are linear polynomials in $\F[\vec{x}]$.
	The polynomial computed by the ABP is the $(1,1)$ entry of the matrix product $M_1(\vec{x}) \cdots M_d(\vec{x})$, where the dimensions of the matrices $M_i(\vec{x})$ are such that the resulting product is defined.
\end{definition}

A trace ABP endows an ABP with multiple sources $s_1,\ldots,s_m$ and sinks $t_1,\ldots,t_m$.
Whereas an ABP computes a sum over all source-to-sink paths, a trace ABP sums over all $s_i$-$t_i$ paths for all choices of $i \in [m]$, allowing the ABP to reuse intermediate vertices for these different sums.
Alternatively, when viewing an ABP as a matrix product, a trace ABP corresponds to taking the trace of the resulting matrix product instead of extracting the $(1,1)$ entry.

\begin{definition}
	A \emph{trace algebraic branching program (trace ABP)} is a layered directed acyclic graph $G = (V,E)$ with source vertices $s_1,\ldots,s_m$ and sink vertices $t_1,\ldots,t_m$.
	By layered, we mean that there is a partition $V = V_0 \sqcup V_1 \sqcup \cdots \sqcup V_d$ such that $V_0 = \set{s_1,\ldots,s_m}$, $V_d = \set{t_1,\ldots,t_m}$, and every edge in $G$ goes from layer $V_{i-1}$ to $V_{i}$ for some $i \in [d]$.
	Every edge $e$ of $G$ is labeled by a linear polynomial $\ell_e(\vec{x}) \in \F[\vec{x}]$.
	Let $\mathcal{P}_{s_i,t_i}$ be the set of $s_i$-$t_i$ paths in $G$.
	The trace ABP computes the polynomial given by 
	\[
		\sum_{i=1}^m \sum_{P \in \mathcal{P}_{s_i,t_i}} \prod_{e \in P} \ell_e(\vec{x}).
	\]
	The \emph{size} of the ABP is $|V|$, the number of vertices in $G$.
	The \emph{width} of the ABP is $\max_{i \in [d]} |V_i|$.

	Equivalently, a trace ABP is given by a collection of matrices $M_1(\vec{x}),\ldots,M_d(\vec{x})$ whose entries are linear polynomials in $\F[\vec{x}]$.
	The polynomial computed by the trace ABP is the trace of the matrix product $M_1(\vec{x}) \cdots M_d(\vec{x})$, where the dimensions of the matrices $M_i(\vec{x})$ are such that the resulting product is defined.
\end{definition}

It is clear that any polynomial computed by an ABP can be computed by a trace ABP of the same size and width.
Conversely, one can transform a trace ABP into a single-source, single-sink ABP by duplicating the trace ABP $m$ times, deleting all but one pair of source and sink vertices in each copy, and identifying the source vertices and sink vertices in the resulting copies.
To the best of our knowledge, this is the best-known simulation of trace ABPs by single-source, single-sink ABPs.

\begin{lemma} \label{lem:abp simulation}
	Let $f(\vec{x}) \in \F[\vec{x}]$ be a polynomial computed by a trace ABP of size $s$ and width $w$.
	Then $f(\vec{x})$ can be computed by a single-source, single-sink ABP of size $ws$ and width $w^2$.
\end{lemma}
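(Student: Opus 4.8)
The plan is to start from the matrix-product description of a trace ABP. Write the trace ABP as a sequence of matrices $M_1(\vec{x}), \ldots, M_d(\vec{x})$, where $M_i$ has dimensions $|V_{i-1}| \times |V_i|$, so that the computed polynomial is $\tr(M_1(\vec{x}) \cdots M_d(\vec{x}))$. Since $V_0 = V_d = \set{s_1,\ldots,s_m} = \set{t_1,\ldots,t_m}$ has $m$ elements, this is the trace of an $m \times m$ matrix, and expanding the trace gives $\sum_{j=1}^m \bigl(M_1 \cdots M_d\bigr)_{j,j}$. The idea is that each diagonal entry $(M_1 \cdots M_d)_{j,j}$ is itself computed by an ordinary single-source, single-sink ABP — namely the $j$-th copy of $G$ with only the source $s_j$ and sink $t_j$ retained — and we want to add these $m$ ABPs together into one.

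First I would take $m$ disjoint copies $G^{(1)}, \ldots, G^{(m)}$ of the layered graph underlying the trace ABP; copy $G^{(j)}$ has layers $V_0^{(j)}, \ldots, V_d^{(j)}$. In copy $G^{(j)}$, delete every source vertex except $s_j$ and every sink vertex except $t_j$ (together with their incident edges); the resulting graph computes exactly $(M_1 \cdots M_d)_{j,j}$ as a single-source, single-sink ABP. Now identify the $m$ surviving sources $s_1 \in G^{(1)}, \ldots, s_m \in G^{(m)}$ into one common source vertex $s$, and likewise identify the $m$ surviving sinks $t_1 \in G^{(1)}, \ldots, t_m \in G^{(m)}$ into one common sink $t$. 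The resulting graph $H$ is still layered: layer $0$ is $\set{s}$, layer $d$ is $\set{t}$, and layer $i$ for $1 \le i \le d-1$ is the disjoint union $\bigsqcup_{j=1}^m V_i^{(j)}$ (intermediate layers are never merged, since no intermediate vertex is a source or sink). Every $s$-$t$ path in $H$ stays inside a single copy $G^{(j)}$ — because the copies share only the vertices $s$ and $t$, and $s$ has no incoming edges while $t$ has no outgoing edges — so the set of $s$-$t$ paths in $H$ is the disjoint union over $j$ of the $s_j$-$t_j$ paths in $G^{(j)}$. Hence $H$ computes $\sum_{j=1}^m (M_1 \cdots M_d)_{j,j} = \tr(M_1 \cdots M_d) = f(\vec{x})$, with the same edge labels, so the linearity of the labels is preserved.

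It remains to count. The layered graph $G$ has $d+1$ layers of total size $s = |V|$, and each intermediate layer has size at most $w$, while $|V_0| = |V_d| = m \le w$. In $H$, layer $0$ and layer $d$ each have size $1$, and each intermediate layer $i$ has size at most $m w \le w^2$, so the width of $H$ is at most $w^2$. For the size: $H$ has $2 + \sum_{i=1}^{d-1} m |V_i|$ vertices; bounding $|V_i| \le w$ (and noting $d - 1 < d \le s$, or more simply that $\sum_{i=0}^d |V_i| = s$) gives at most $2 + m \cdot s \le w s$ vertices once $m \le w$ and the small additive constant is absorbed (one can also just observe that $H$ consists of $m$ copies of a graph on at most $s$ vertices with a handful of identifications, giving at most $ms \le ws$ vertices). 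This yields an ABP of size $ws$ and width $w^2$ computing $f$, as claimed. I do not anticipate a serious obstacle here; the only point requiring care is verifying that $s$-$t$ paths in $H$ cannot jump between copies, which follows from the fact that the copies are glued only at the extreme layers and those vertices have in-degree $0$ (resp. out-degree $0$).
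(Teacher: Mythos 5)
Your proposal is correct and is a careful writeup of exactly the construction sketched in the paper just before the lemma (duplicate the trace ABP $m$ times, delete all but one source--sink pair in each copy, and identify the surviving sources and sinks). The size and width counts you give, $|H| = 2 + m\sum_{i=1}^{d-1}|V_i| \le ms \le ws$ and $\max_i m|V_i| \le mw \le w^2$, match the stated bounds.
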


We will make use of the following result of \textcite{BS83} that transforms a circuit that computes a polynomial $f(\vec{x})$ into one that computes all first-order partial derivatives of $f(\vec{x})$ while increasing the circuit size by only a constant factor.
We state the version of their result for multiplicative complexity, although an analogous statement holds for circuit size.
Note that by taking $\F = \mathbb{K}(\eps)$ where $\mathbb{K}$ is a field, this lemma extends to the setting of border complexity (defined in \autoref{subsec:border complexity}).

\begin{lemma}[\cite{BS83}] \label{lem:baur-strassen}
	Let $f(\vec{x}) \in \F[\vec{x}]$ be a polynomial computed by an algebraic circuit of multiplicative complexity $s$.
	Then there is a multi-output algebraic circuit of multiplicative complexity $3s$ that computes $\Set{f(\vec{x}), \frac{\partial f}{\partial x_1}(\vec{x}), \ldots \frac{\partial f}{\partial x_n}(\vec{x})}$.
\end{lemma}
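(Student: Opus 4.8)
The plan is to prove this by the classical \emph{backward mode} of algorithmic differentiation: given a circuit for $f$, process its gates in reverse topological order while propagating adjoint values that represent the partial derivatives of the output with respect to each gate, and then check that this introduces only a constant-factor blow-up in the number of multiplication gates.

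First I would fix a topological order $v_1, \ldots, v_N$ on the gates of the given circuit, with $v_N$ the output gate, and let $g_i \in \F[\vec{x}]$ denote the polynomial computed at $v_i$; after a harmless preprocessing step (merging input gates carrying the same label, which only removes gates) I may assume each variable $x_k$ labels at most one input gate. The forward circuit already computes $f = g_N$ with $s$ multiplication gates and makes every intermediate value $g_i$ available at no extra multiplicative cost. I then build a ``backward'' part that maintains, for each gate $v_i$, an accumulator $\bar{g}_i$, initialized to $0$ for $i < N$ and to $1$ for $i = N$. Processing $i$ from $N$ down to $1$: if $v_i$ is an addition gate with children $v_j, v_k$ and incoming edge scalars $\alpha, \beta$, so $g_i = \alpha g_j + \beta g_k$, I add $\alpha \bar{g}_i$ to $\bar{g}_j$ and $\beta \bar{g}_i$ to $\bar{g}_k$; if $v_i$ is a multiplication gate with $g_i = (\alpha g_j)(\beta g_k)$, I add $\alpha\beta\,\bar{g}_i g_k$ to $\bar{g}_j$ and $\alpha\beta\,\bar{g}_i g_j$ to $\bar{g}_k$. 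The products $\bar{g}_i \cdot g_k$ and $\bar{g}_i \cdot g_j$ are the only new multiplication gates; the scalars $\alpha, \beta, \alpha\beta$ fold into edge labels, and every accumulation step is an addition gate. Hence each of the $s$ multiplication gates of the original circuit contributes exactly two new multiplication gates, for total multiplicative complexity $s + 2s = 3s$; and since $\bar{g}_i$ depends only on the adjoints $\bar{g}_p$ of later gates and on forward values of earlier gates, the combined object is a legitimate acyclic circuit. At the end I read off $\bar{g}_i$ at the input gate labeled $x_k$ as the output for $\frac{\partial f}{\partial x_k}$ (a constant $0$ if $x_k$ labels no input gate), together with $g_N$ as the output for $f$.

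For correctness I would argue by downward induction on $i$ that, once gates $v_N, \ldots, v_{i+1}$ have been processed, the accumulator $\bar{g}_i$ equals $\frac{\partial g_N}{\partial g_i}$ in the precise sense of the chain rule applied to the straight-line program underlying the circuit: formally, introduce a fresh variable $z_\ell$ for each gate $v_\ell$, express $g_N$ as a polynomial in $\vec{x}$ and the $z_\ell$ by substituting the local operations in reverse order, and define $\bar{g}_i$ to be the partial derivative of this polynomial with respect to $z_i$ evaluated at $z_\ell = g_\ell$. The base case $\bar{g}_N = 1$ is immediate, and the inductive step is exactly the sum rule and product rule for the local operation at $v_i$, using that all parents of $v_i$ lie among $v_{i+1}, \ldots, v_N$ so that $\bar{g}_i$ is fully accumulated before it is consumed. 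One further application of the chain rule, summing the contributions of the input gates labeled $x_k$, then gives $\frac{\partial f}{\partial x_k}$.

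I expect the main obstacle to be bookkeeping rather than any conceptual difficulty: one must set up the formal straight-line-program description carefully enough that the chain rule is literally an identity of polynomials, and one must be scrupulous in the gate count so that the ``two new multiplications per multiplication gate'' bound is not inflated by recomputing forward values or by realizing scalar multiplications as genuine multiplication gates. For the border-complexity version noted in the remark preceding the statement, I would simply run this entire construction over the field $\mathbb{K}(\eps)$, since nothing above uses anything beyond the field axioms.
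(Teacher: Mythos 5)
Your proof is correct: it is the classical reverse-mode (Baur--Strassen) argument, with the factor-$3$ count coming from the $s$ forward multiplications plus two new products $\bar{g}_i g_k$ and $\bar{g}_i g_j$ per original multiplication gate, scalar factors being absorbed into edge labels as the paper's circuit model permits. The paper itself gives no proof and simply cites \textcite{BS83}, whose argument is essentially the one you describe, so there is nothing to reconcile beyond the bookkeeping you already flag.
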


\subsection{Border Complexity} \label{subsec:border complexity}

We will crucially make use of border complexity, which is an approximative version of algebraic computation.

\begin{definition}
	Let $\F$ be a field and $\eps$ be an indeterminate.
	Let $f(\vec{x}) \in \F[\vec{x}]$ be a polynomial.
	We say that a circuit $\Phi$ over the field $\F(\eps)$ \emph{border computes} $f(\vec{x})$ if $\Phi$ computes a polynomial of the form
	\[
		f(\vec{x}) + \eps \cdot g(\vec{x},\eps),
	\]
	where $g(\vec{x},\eps) \in \F[\vec{x},\eps]$.
	We frequently abbreviate this by saying that $\Phi$ computes $f(\vec{x}) + O(\eps)$.
\end{definition}

Over fields of characteristic zero, one can think of border computation as computing a polynomial $f$ up to an arbitrarily small error $\eps$.
The definition above extends to fields of positive characteristic, although this will not be relevant for our work.
Naturally, one can consider the notion of border complexity for restricted classes of circuits, like formulas or branching programs.

If $\mathcal{C}$ is a class of circuits, we define the \emph{closure} of $\mathcal{C}$ to be the set of polynomials that can be border computed by a $\mathcal{C}$-circuit.
For example, if $\mathcal{C}$ is the class of size-$s$ circuits, the closure of $\mathcal{C}$ consists of all polynomials $f(\vec{x})$ such that $f(\vec{x}) + O(\eps)$ can be computed by a size-$s$ circuit over $\F(\eps)$.

In the course of our work, we will prove lower bounds by constructing oracle circuits.
The following lemma says that in the setting of border complexity, one can replace an exact oracle with an approximate oracle without incurring an increase in circuit size.
This makes our job easier, as we only need to reason about circuits using exact oracles.
This lemma is a straightforward consequence of \cite[Lemma 2.3(1)]{Burgisser04}; for a proof, see, e.g., \cite[Lemma 2.3]{AF22}.

\begin{lemma} \label{lem:exact to approx oracle}
	Let $f(\vec{x}), g(\vec{x}) \in \F[\vec{x}]$ be polynomials.
	Suppose $f(\vec{x}) + O(\eps)$ can be computed by a circuit of size $s$ with $g$-oracle gates.
	Let $h(\vec{x},\delta) \in \F\llb \delta \rrb [\vec{x}]$ be a polynomial such that $h(\vec{x},\delta) = g(\vec{x}) + O(\delta)$.
	Then there is some $N \in \naturals$ such that $f(\vec{x}) + O(\eps)$ can be computed by a circuit of size $s$ with $h(\vec{x},\eps^N)$-oracle gates.
\end{lemma}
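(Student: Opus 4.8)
The plan is to perform the oracle substitution in a single step, specializing the approximation parameter to $\eps^N$ for a sufficiently large $N$, and to bound the accumulated error by a pole-order bookkeeping argument in $\eps$. First I would introduce a fresh indeterminate $\delta$ and let $\Phi_\delta$ be the circuit obtained from the given size-$s$ circuit $\Phi$ by replacing every $g$-oracle gate with an $h(\vec{x},\delta)$-oracle gate, leaving all other gates and all field constants (the input labels and edge labels, which lie in $\F(\eps)$) untouched. Since $h(\vec{x},\delta) \in \F\llb\delta\rrb[\vec{x}]$, the polynomial $\phi_o$ computed at the output gate $o$ of $\Phi_\delta$ is a power series in $\delta$, say $\phi_o = \sum_{k \ge 0} F_k(\vec{x},\eps)\,\delta^k$ with each $F_k$ a polynomial in $\vec{x}$ whose coefficients lie in $\F((\eps))$. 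Because $h(\vec{x},0) = g(\vec{x})$, specializing $\delta = 0$ turns $\Phi_\delta$ back into $\Phi$, so $F_0$ is exactly the polynomial $f(\vec{x}) + \eps\, r(\vec{x},\eps)$ witnessing that $\Phi$ border computes $f$.

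The heart of the argument, and the step I expect to be the main obstacle, is a bound on the $\eps$-pole orders of the $F_k$ that is uniform in $k$: there should be a constant $A$, depending only on $\Phi$ and on the $\vec{x}$-degree of $h$, such that $\eps^A F_k \in \F\llb\eps\rrb[\vec{x}]$ for every $k \ge 0$. I would prove this by structural induction on the gates of $\Phi_\delta$: the value $\phi_v$ of each gate, viewed as a power series in $\delta$, has all of its $\delta$-coefficients of $\eps$-pole order at most a single constant $M_v$ (equivalently $\eps^{M_v}\phi_v \in \F\llb\eps\rrb\llb\delta\rrb[\vec{x}]$). The field constants of $\Phi$ give the base case with fixed pole orders; an addition gate gives $M_v = \max(M_{v_1},M_{v_2})$ and a multiplication gate gives $M_v = M_{v_1}+M_{v_2}$. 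The delicate case is the oracle gate with inputs $v_1,\dots,v_n$: writing $h(\vec{y},\delta) = \sum_{|\alpha| \le \Delta} c_\alpha(\delta)\,\vec{y}^{\alpha}$ with $c_\alpha(\delta) \in \F\llb\delta\rrb$ and $\Delta = \deg_{\vec{y}} h$, the gate computes $\sum_\alpha c_\alpha(\delta)\prod_j \phi_{v_j}^{\alpha_j}$, and the point is that the coefficients $c_\alpha(\delta)$ are free of poles in $\eps$, so this has $\eps$-pole order at most $\Delta \cdot \max_j M_{v_j}$ — a bound that does not involve $\delta$, hence is uniform in $k$. (It is precisely the hypothesis that $h$ is a genuine polynomial in $\vec{x}$, of some finite degree $\Delta$, that keeps this bound from degrading as $k$ grows.) This uniform-in-$k$ bound is essentially the content of \cite[Lemma 2.3(1)]{Burgisser04}, which one may cite directly in place of the induction above.

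Given the claim with constant $A$, the lemma follows by specializing $\delta = \eps^N$ for any $N \ge A+1$. This substitution is well-defined precisely because of the uniform bound: the term $F_k\,\eps^{Nk}$ appearing in $\phi_o\big|_{\delta=\eps^N}$ has $\eps$-valuation at least $Nk - A$, and as $k \to \infty$ these valuations tend to $+\infty$ while the $F_k$ have uniformly bounded degree in $\vec{x}$, so the substituted series converges $\eps$-adically. For $k \ge 1$ we have $Nk - A \ge 1$, whence $\sum_{k \ge 1} F_k\,\eps^{Nk} \in \eps\,\F\llb\eps\rrb[\vec{x}]$; together with $F_0 = f(\vec{x}) + \eps\,r(\vec{x},\eps)$, this shows that $\Phi_\delta$ with $\delta$ set to $\eps^N$ computes a polynomial that is regular at $\eps = 0$ with value $f(\vec{x})$, i.e.\ it border computes $f$. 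This circuit has the same size $s$ as $\Phi$, and its oracle gates now compute $h(\vec{x},\eps^N)$, as required (it is a circuit over $\F(\eps)$ when $h \in \F(\delta)[\vec{x}]$, and over $\F((\eps))$ in general, either of which is fine for border computation). The only loose end is the harmless normalization between this form and the $f + \eps g$ normal form with $g \in \F[\vec{x},\eps]$, which does not affect circuit size.
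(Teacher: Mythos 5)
Your proposal is correct, and it follows essentially the same strategy the paper alludes to via its citations (the paper itself does not spell out a proof but refers to \cite[Lemma 2.3(1)]{Burgisser04} and \cite[Lemma 2.3]{AF22}). The crucial point — that the $\eps$-pole orders of the $\delta$-coefficients are bounded uniformly in $k$ because the $c_\alpha(\delta)$ carry no $\eps$-dependence and $h$ has bounded $\vec{x}$-degree — is exactly the content of Bürgisser's lemma, and your gate-by-gate induction together with the specialization $\delta = \eps^N$ for $N \ge A+1$ is a faithful rendition of the standard argument.
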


\subsection{Polynomial Identity Testing}

We will design polynomial identity testing algorithms that operate on circuits in a black-box manner; that is, our algorithms will only evaluate the circuit and will not examine the internal structure of the circuit.
This is equivalent to giving an explicit construction of a hitting set for the class of circuits under consideration.

\begin{definition}
	Let $\mathcal{C} \subseteq \F[\vec{x}]$ be a set of polynomials.
	A set $\mathcal{H} \subseteq \F^n$ is a \emph{hitting set for $\mathcal{C}$} if for every nonzero $f \in \mathcal{C}$, there is some $\vec{\alpha} \in \mathcal{H}$ such that $f(\vec{\alpha}) \neq 0$.
\end{definition}

Equivalently, one can attempt to construct a hitting set generator, which is analogous to a pseudorandom generator in this setting.

\begin{definition}
	Let $\mathcal{C} \subseteq \F[\vec{x}]$ be a set of polynomials.
	A polynomial map $\mathcal{G} : \F^\ell \to \F^n$ is a \emph{hitting set generator for $\mathcal{C}$} if for every nonzero $f \in \mathcal{C}$, we have $f(\mathcal{G}(\vec{y})) \neq 0$.
	We call $\ell$ the \emph{seed length} of the generator.
	The \emph{degree} of the generator, denoted by $\deg(\mathcal{G})$, is given by $\max_{i \in [n]} \deg(\mathcal{G}_i)$.
\end{definition}

One can translate between hitting sets and hitting set generators using the Schwartz--Zippel lemma \cite{Schwartz80,Zippel79} and polynomial interpolation.
We note that if $\mathcal{C} \subseteq \F[X]$ is a set of degree-$d$ polynomials and $\mathcal{G} : \F^\ell \to \F^n$ is a hitting set generator for $\mathcal{C}$, one obtains a hitting set of size $(d \cdot \deg(\mathcal{G}) + 1)^\ell$.
In contrast, one can always construct a hitting set of size $(d+1)^n$.
Note that a generator with $\deg(\mathcal{G}) \le d^{O(1)}$ and $\ell \le o(n)$ corresponds to a hitting set of size $d^{o(n)}$, which is a super-polynomial improvement over the trivial hitting set of size $(d+1)^n$.

\subsection{Determinantal Ideals and Matrix Rank}

Let $X$ be an $n \times m$ matrix of variables.
We denote by $\detideal{n}{m}{r} \subseteq \F[X]$ the ideal of $\F[X]$ generated by the $r \times r$ minors of $X$.
We make use the following proposition of \textcite{AF22}, which reduces the task of proving lower bounds on all polynomials in $\detideal{n}{m}{r}$ to the task of proving lower bounds on products of minors.
We note that the polynomial $(K_\sigma | K_\sigma)(X)$ appearing in the statement of \cite[Proposition 3.5]{AF22} is exactly the same as the product of determinants that appears in the proposition below.
However, we give a more direct statement of this proposition to avoid the language of bitableaux and bideterminants, which is unnecessary for the results of this work.

\begin{proposition}[{\cite[Proposition 3.5]{AF22}}] \label{prop:reduction to single bideterminant}
	Let $f(X) \in \detideal{n}{m}{r}$ be nonzero.
	There is a collection of $nm$ linearly independent linear functions $\ell_{i,j}(X,\eps) \in \F(\eps)[X]$ indexed by $(i,j) \in [n] \times [m]$, an integer $q \in \integers$, a nonzero $\alpha \in \F$, and natural numbers $\sigma_1, \ldots, \sigma_p$ with $\sigma_1 \ge r$ such that
	\[
		f(\ell_{1,1}(X,\eps),\ldots,\ell_{n,m}(X,\eps)) = \eps^q \alpha \prod_{i=1}^{p} \det_{\sigma_i}(X_{[\sigma_i],[\sigma_i]}) + O(\eps^{q+1}).
	\]
\end{proposition}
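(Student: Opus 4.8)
The plan is to reduce the claim about an arbitrary nonzero $f \in \detideal{n}{m}{r}$ to a statement about a single product of minors, using the rich $\GL_n \times \GL_m$-action on determinantal ideals together with a specialization/straightening argument. The first step is to recall the classical fact that $\detideal{n}{m}{r}$ is $\GL_n \times \GL_m$-stable: for $(A,B) \in \GL_n \times \GL_m$, the substitution $X \mapsto A X B$ permutes the generating $r \times r$ minors up to linear combinations, so it preserves the ideal. Consequently, the associated graded object, and in particular a suitable leading-term / initial ideal under a diagonal term order, is controlled by the combinatorics of minors. The key structural input is the classical straightening law (the theory of bitableaux / standard monomial theory of Hodge, De Concini--Eisenbud--Procesi): products of minors form a basis of $\F[X]$ indexed by standard bitableaux, the ideal $\detideal{n}{m}{r}$ is spanned by standard bitableaux whose shape has first row of length $\ge r$, and the initial term (under the right term order) of a standard bitableau is the obvious diagonal monomial. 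This is exactly the machinery referenced via \cite[Proposition 3.5]{AF22}; the role of the present proposition is only to extract the consequence we need in elementary language, so I would import that theory as a black box rather than redevelop it.

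With straightening in hand, the argument runs as follows. Write $f$ in the standard bitableau basis; since $f \in \detideal{n}{m}{r}$ and is nonzero, at least one basis element $(K_\sigma \mid K_\sigma)$ with shape $\sigma$ satisfying $\sigma_1 \ge r$ appears with nonzero coefficient. Pick the term that is leading with respect to a chosen monomial order on the bitableaux (say lexicographic on shapes, refined appropriately); call its coefficient $\alpha \in \F$, $\alpha \ne 0$. Now I want to produce a linear change of variables that isolates this leading bitableau. The idea is to act by a generic element of $\GL_n \times \GL_m$ defined over $\F(\eps)$ — concretely, a one-parameter subgroup $(A(\eps), B(\eps))$ chosen so that, after the substitution $X \mapsto A(\eps) X B(\eps)$, the leading bitableau survives with a controlled power of $\eps$ while every other bitableau in the support of $f$ is multiplied by a strictly higher power of $\eps$. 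This is a standard "degeneration" trick: a generic torus or unipotent one-parameter subgroup acts on the basis of bitableaux with distinct weights, so for a suitable choice of weights the term of interest dominates. After dividing by the appropriate power $\eps^q$ of $\eps$, the transformed polynomial is $\alpha \cdot (K_\sigma \mid K_\sigma)(X) + O(\eps)$; and $(K_\sigma \mid K_\sigma)(X)$, after a further constant linear change selecting which rows and columns the minors use (again inside $\GL_n \times \GL_m$), becomes exactly $\prod_{i=1}^p \det_{\sigma_i}(X_{[\sigma_i],[\sigma_i]})$, where $\sigma_1 \ge \dots \ge \sigma_p$ are the column lengths (conjugate partition) of the shape, with $\sigma_1 \ge r$. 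Composing all these linear substitutions gives the $nm$ linearly independent linear forms $\ell_{i,j}(X,\eps) \in \F(\eps)[X]$, since each factor is an invertible linear map.

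The main obstacle I anticipate is verifying that one can choose the one-parameter subgroup $(A(\eps),B(\eps))$ so that it simultaneously (a) is a genuine linear substitution with invertible — hence the $\ell_{i,j}$ linearly independent — specialization, (b) kills no bitableau in the support (so that $\alpha$ and the shape $\sigma$ are preserved, not accidentally cancelled), and (c) gives the leading bitableau the \emph{strictly} lowest $\eps$-valuation among all terms, so that after normalizing by $\eps^q$ the limit is exactly the single product of minors with no extra surviving terms. This is precisely the content one has to unpack from the weight combinatorics of standard monomial theory: one needs the weights of distinct standard bitableaux under a generic torus to be distinct, and then a linear-algebra argument (Vandermonde-type) that a single $\eps$-parametrized direction separates the finitely many weights occurring in $f$. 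I would handle this by invoking the explicit description of bitableau weights under the maximal torus of $\GL_n \times \GL_m$ and choosing integer weight vectors generically; the $O(\eps^{q+1})$ error term then collects all the non-dominant contributions. Everything else — that the product of minors lies in $\detideal{n}{m}{r}$, that the composed substitution has $nm$ independent linear forms, that $\sigma_1 \ge r$ — is bookkeeping that follows once the degeneration is set up correctly.
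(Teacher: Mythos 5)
The paper cites this result from \textcite[Proposition 3.5]{AF22} and gives no proof of its own, so there is no in-paper argument to compare against. Your blueprint --- straighten $f$ into the standard bideterminant basis, observe that every shape in the support has first part at least $r$, and then degenerate by a one-parameter subgroup of $\GL_n \times \GL_m$ over $\F(\eps)$ --- is the right framework and matches the approach of \cite{AF22} in spirit.

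The genuine gap is in your final normalization step. You isolate an arbitrary leading standard bitableau $(T \mid T')$ in the support of $f$ and then assert that ``a further constant linear change selecting which rows and columns the minors use'' converts it to $\prod_i \det(X_{[\sigma_i],[\sigma_i]})$. This fails. The target product has nested index sets $[\sigma_1] \supseteq [\sigma_2] \supseteq \cdots$, so a relabeling of rows and columns sending each row set $T_k$ of $T$ to $[\sigma_k]$ exists only when the $T_k$ are themselves nested, which is false for general standard tableaux: for shape $(2,1)$, the standard tableau with rows $\{1,3\}$ and $\{2\}$ has $\{2\} \not\subseteq \{1,3\}$, and no permutation of $\{1,2,3\}$ sends $\{1,3\}$ to $\{1,2\}$ and $\{2\}$ to $\{1\}$ simultaneously. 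A general substitution $X \mapsto AXB$ does not rescue this, since it maps one bideterminant to a linear combination of many rather than to a single new one. The fix is to land on the canonical bideterminant directly rather than normalize after the fact: first apply a \emph{generic} constant change $X \mapsto AXB$ over $\F$ so that, for every shape $\sigma$ occurring in $f$, the coefficient of $(K_\sigma \mid K_\sigma)$ in $(A,B) \cdot f$ is nonzero (this uses that the irreducible $\GL_n \times \GL_m$-module indexed by $\sigma$ appears with multiplicity one in $\F[X]$, so the highest-weight coordinate is a nonzero regular function of $(A,B)$); only then run the diagonal degeneration $x_{i,j} \mapsto \eps^{a_i + b_j} x_{i,j}$ with rapidly increasing weights. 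The content-majorization inequality for standard bitableaux --- with equality exactly at $K_\sigma$ --- then makes $(K_{\sigma^*} \mid K_{\sigma^*})$ the unique lowest-$\eps$-order term for the extremal shape $\sigma^*$, which is precisely the form the proposition demands.
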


It is well-known that when the underlying field $\F$ is algebraically closed, the ideal $\detideal{n}{m}{r}$ consists exactly of those polynomials which vanish on all matrices of rank less than $r$.
In particular, proving a lower bound of $s$ on the complexity of all nonzero polynomials in $\detideal{n}{m}{r}$ equates to proving that every polynomial of complexity less than $s$ cannot vanish on all matrices of rank less than $r$.
There is a natural hitting set generator whose image contains all low-rank matrices.

\begin{construction} \label{cons:matrix generator}
	Let $n, m, r \in \naturals$ with $r \le \min(n,m)$.
	Define the map $\mathcal{G}_{n,m,r} : \F^{n \times r} \times \F^{r \times m} \to \F^{n \times m}$ via
	\[
		\mathcal{G}_{n,m,r}(Y,Z)_{i,j} \coloneqq (YZ)_{i,j}.
	\]
\end{construction}

It is evident from its definition that the generator of \autoref{cons:matrix generator} contains in its image all $n \times m$ matrices of rank at most $r$.
The connection between matrix rank and the ideal $\detideal{n}{m}{r}$ can be used to prove the following lemma.
For the sake of completeness, we provide a proof (the same proof can be found in the discussion preceding \cite[Lemma 2.10]{AF22}).

\begin{lemma} \label{lem:vanish on matrix generator} 
	Let $\F$ be any field and let $n,m,r \in \naturals$ with $r \le \min(n,m)$.
	Let $\detideal{n}{m}{r} \subseteq \F[X]$ denote the ideal of $\F[X]$ generated by the $r \times r$ minors of a generic $n \times m$ matrix $X$ and let $f(X) \in \F[X]$.
	Then $f(\mathcal{G}_{n,m,r-1}(Y,Z)) = 0$ if and only if $f(X) \in \detideal{n}{m}{r}$.
\end{lemma}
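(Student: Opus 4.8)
The statement is essentially a restatement of the classical fact that the determinantal ideal $\detideal{n}{m}{r}$ is the vanishing ideal of the variety of matrices of rank $< r$, combined with the observation that $\mathcal{G}_{n,m,r-1}$ surjects onto that variety over an algebraically closed field. The plan is to prove both implications directly, reducing the general-field case to the algebraically closed case by a standard field-extension argument.

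First I would handle the easy direction: if $f(X) \in \detideal{n}{m}{r}$, then $f(\mathcal{G}_{n,m,r-1}(Y,Z)) = 0$ identically as a polynomial in the entries of $Y$ and $Z$. Indeed, every generator of $\detideal{n}{m}{r}$ is an $r \times r$ minor of $X$; substituting $X = YZ$ with $Y$ of size $n \times (r-1)$ and $Z$ of size $(r-1) \times m$, the Cauchy--Binet formula expresses each such minor of $YZ$ as a sum of products of minors of $Y$ and $Z$, but every $r \times r$ minor of $Y$ (or of $Z$) vanishes since those matrices have only $r-1$ columns (resp.\ rows). Hence each generator, and therefore every element of the ideal, pulls back to the zero polynomial. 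This direction works over any field.

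For the converse, suppose $f(\mathcal{G}_{n,m,r-1}(Y,Z)) = 0$. I would first base-change to the algebraic closure $\overline{\F}$: the condition $f \in \detideal{n}{m}{r}$ is equivalent to the same containment after extending scalars to $\overline{\F}$, since $\detideal{n}{m}{r}$ has a generating set (the $r \times r$ minors) defined over $\F$ and a Gröbner basis computation (or flatness of field extensions) shows ideal membership is preserved and reflected by the faithfully flat extension $\F[X] \hookrightarrow \overline{\F}[X]$. So assume $\F = \overline{\F}$. Since the image of $\mathcal{G}_{n,m,r-1}$ over an algebraically closed field is precisely the set of matrices of rank $\le r-1$ (every such matrix factors through its column space), the hypothesis says $f$ vanishes on every matrix of rank $< r$, i.e., $f$ lies in the vanishing ideal of the determinantal variety $\{X : \rank X < r\}$. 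Now I invoke the classical theorem that this vanishing ideal equals $\detideal{n}{m}{r}$ — equivalently, that $\detideal{n}{m}{r}$ is a radical ideal whose zero set is exactly the rank-$<r$ locus. One can cite this (e.g.\ Bruns--Vetter, or derive it from the perfection/primality of generic determinantal ideals), or supply the short argument that these ideals are prime via the standard generic-point construction. That gives $f \in \detideal{n}{m}{r}$ over $\overline{\F}$, and descending back along the field extension finishes the proof.

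The main obstacle is the converse direction, specifically the appeal to the classical result that $\detideal{n}{m}{r}$ is radical (indeed prime) with the expected zero set; this is the one nontrivial ingredient and the place where one must either cite the literature carefully or reprove primality. Everything else — Cauchy--Binet for the forward direction, surjectivity of $\mathcal{G}_{n,m,r-1}$ onto low-rank matrices, and the faithfully-flat descent between $\F$ and $\overline{\F}$ — is routine. Given that the excerpt already cites \cite{AF22} for this exact lemma and its proof, I would keep the write-up brief, citing the radicality of generic determinantal ideals and spelling out only the Cauchy--Binet computation and the field-extension reduction.
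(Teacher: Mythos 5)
Your proposal follows the same route as the paper's proof: the easy direction via vanishing of $r \times r$ minors of the rank-$(r-1)$ product $YZ$, the hard direction by reducing to $\overline{\F}$, observing that $f$ lies in the vanishing ideal of the rank-$\le r-1$ locus, and invoking the Nullstellensatz together with the radicality of $\detideal{n}{m}{r}$ (Bruns--Vetter). The only cosmetic differences are that the paper phrases the easy direction directly in terms of rank rather than Cauchy--Binet, and carries out the descent from $\overline{\F}$ to $\F$ by an explicit coefficientwise linear projection (its \autoref{lem:ideal base change}) rather than appealing to flatness of the field extension.
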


\begin{proof}
	If $f(X) \in \detideal{n}{m}{r}$, then we can write $f$ as $f(X) = \sum_{i=1}^N g_i(X) h_i(X)$ where the polynomials $\set{g_1,\ldots,g_N}$ are the $r \times r$ minors of $X$.
	Because the image of $\mathcal{G}_{n,m,r-1}(Y,Z)$ is necessarily a matrix of rank at most $r-1$, each $r \times r$ minor of $\mathcal{G}_{n,m,r-1}(Y,Z)$ vanishes, i.e., $g_i(\mathcal{G}_{n,m,r-1}(Y,Z)) = 0$ for all $i \in [N]$.
	This implies $f(\mathcal{G}_{n,m,r-1}(Y,Z)) = 0$.

	To prove the converse direction, we first work under the assumption that the field $\F$ is algebraically closed.
	Suppose that $f(\mathcal{G}_{n,m,r-1}(Y,Z)) = 0$.
	Let $J_{n,m,r-1} \subseteq \F[X]$ be the ideal of $\F[X]$ consisting of polynomials that vanish on the set of matrices of rank at most $r-1$.
	Because the image of $\mathcal{G}_{n,m,r-1}(Y,Z)$ contains all matrices of rank at most $r-1$, we have $f \in J_{n,m,r-1}$.
	To show that $f \in \detideal{n}{m}{r}$, we will prove the equality $\detideal{n}{m}{r} = J_{n,m,r-1}$.

	The inclusion $\detideal{n}{m}{r} \subseteq J_{n,m,r-1}$ is immediate, as the $r \times r$ minors vanish on matrices of rank less than $r$.
	For the inclusion in the reverse direction, we use the correspondence between ideals and varieties.
	Recall that for an ideal $I \subseteq F[X]$, we denote by $V(I) \subseteq \F^{n \times m}$ the \emph{variety} of $I$, defined as 
	\[
		V(I) \coloneqq \set{A \in \F^{n \times m} : \forall h(X) \in I, h(A) = 0}.
	\]
	Let $V(\detideal{n}{m}{r})$ be the variety over $\F$ defined by the ideal $\detideal{n}{m}{r}$ and let $A \in V(\detideal{n}{m}{r})$ be a point in this variety.
	By definition, each $r \times r$ minor of $A$ vanishes, so $\rank(A) \le r-1$, which implies $A \in V(J_{n,m,r-1})$.
	This shows $V(\detideal{n}{m}{r}) \subseteq V(J_{n,m,r-1})$.
	By Hilbert's Nullstellensatz, this implies $\sqrt{J_{n,m,r-1}} \subseteq \sqrt{\detideal{n}{m}{r}}$, where $\sqrt{I}$ denotes the \emph{radical} of an ideal $I$.
	The ideal $\detideal{n}{m}{r}$ is radical (see, e.g., \cite[Theorem 2.10 and Remark 2.12]{BV88}), so we have the desired inclusion
	\[
		J_{n,m,r-1} \subseteq \sqrt{J_{n,m,r-1}} \subseteq \sqrt{\detideal{n}{m}{r}} = \detideal{n}{m}{r}.
	\]
	This proves $J_{n,m,r-1} = \detideal{n}{m}{r}$, hence $f(X) \in \detideal{n}{m}{r}$ as claimed.

	If $\F$ is not algebraically closed, we can still consider $f(X)$ as a polynomial over the algebraic closure $\overline{\F}$.
	If $f(\mathcal{G}_{n,m,r-1}(Y,Z)) = 0$, the previous argument implies that $f \in \detideal{n}{m}{r}$ when $\detideal{n}{m}{r}$ is considered as an ideal over $\overline{\F}$.
	Letting $I_\F$ and $I_{\overline{\F}}$ denote $\detideal{n}{m}{r}$ when considered as an ideal over $\F$ and $\overline{\F}$, respectively, we have $f \in I_{\overline{\F}} \cap \F[\vec{x}]$.
	\autoref{lem:ideal base change} below shows that $I_{\overline{\F}} \cap \F[\vec{x}] = I_{\F}$, so we in fact have $f \in I_\F$ as desired.
\end{proof}

The following is an elementary lemma used in the proof of \autoref{lem:vanish on matrix generator} in the case where $\F$ is not algebraically closed.
In the spirit of keeping this work self-contained, we provide a proof.
\todo{Is there a reference for this lemma?}

\begin{lemma} \label{lem:ideal base change}
	Let $\F$ be a field and let $\K \supseteq \F$ be an extension of $\F$.
	Let $\set{g_1,\ldots,g_m} \subseteq \F[\vec{x}]$ be a set of polynomials.
	Let $I_\F$ and $I_\K$ be the ideals generated by $\set{g_1,\ldots,g_m}$ over $\F[\vec{x}]$ and $\K[\vec{x}]$, respectively.
	Then $I_\F = I_\K \cap \F[\vec{x}]$.
\end{lemma}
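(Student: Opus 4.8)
The plan is to prove the two inclusions separately. The inclusion $I_\F \subseteq I_\K \cap \F[\vec{x}]$ is immediate: every element of $I_\F$ is by definition an $\F[\vec{x}]$-linear combination of the $g_i$, hence also a $\K[\vec{x}]$-linear combination of the $g_i$, so it lies in $I_\K$; and it lies in $\F[\vec{x}]$ by construction.

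For the reverse inclusion $I_\K \cap \F[\vec{x}] \subseteq I_\F$, the idea is to exploit the fact that $\K$ is a vector space over $\F$ and that the element $1$ can be extended to an $\F$-basis of $\K$. Concretely, I would fix an $\F$-subspace $W \subseteq \K$ complementary to $\F \cdot 1$, so that $\K = \F \cdot 1 \oplus W$ as $\F$-vector spaces. Applying this decomposition to polynomials coefficient-by-coefficient yields a direct sum decomposition $\K[\vec{x}] = \F[\vec{x}] \oplus V$ of $\F$-vector spaces, where $V$ denotes the set of polynomials in $\K[\vec{x}]$ all of whose coefficients lie in $W$. A routine check shows that $V$ is closed under multiplication by elements of $\F[\vec{x}]$: multiplying a polynomial with coefficients in $W$ by a polynomial with coefficients in $\F$ produces a polynomial whose coefficients are $\F$-linear combinations of elements of $W$, hence again in $W$.

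Now, given $f \in I_\K \cap \F[\vec{x}]$, write $f = \sum_{i=1}^m h_i g_i$ with $h_i \in \K[\vec{x}]$, and decompose each $h_i = a_i + v_i$ with $a_i \in \F[\vec{x}]$ and $v_i \in V$. Then $f = \sum_i a_i g_i + \sum_i v_i g_i$, where $\sum_i a_i g_i \in \F[\vec{x}]$ (indeed $\sum_i a_i g_i \in I_\F$) and $\sum_i v_i g_i \in V$ by the closure property above. Since $f \in \F[\vec{x}]$ and the sum $\F[\vec{x}] \oplus V$ is direct, comparing components forces $\sum_i v_i g_i = 0$, whence $f = \sum_i a_i g_i \in I_\F$, as desired.

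I expect no serious obstacle here; the only point requiring a little care is verifying that the coefficient-wise decomposition of $\K$ induces a direct sum decomposition of $\K[\vec{x}]$ respecting multiplication by $\F[\vec{x}]$, together with the (harmless) use of a Hamel basis of $\K$ over $\F$ when $\K$ is infinite-dimensional. One can alternatively phrase the whole argument as the observation that $\K$, being free and nonzero as an $\F$-module, is faithfully flat over $\F$, so that $\K[\vec{x}] = \F[\vec{x}] \otimes_\F \K$ is faithfully flat over $\F[\vec{x}]$ and hence $I_\K \cap \F[\vec{x}] = I_\F$ for any ideal $I_\F \subseteq \F[\vec{x}]$; but the elementary argument above keeps the exposition self-contained.
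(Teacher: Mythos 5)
Your argument is correct and is essentially the paper's proof in different clothing: the paper picks an $\F$-basis of $\K$ containing $1$ and works with the induced $\F$-linear projection $\pi : \K \to \F$ applied coefficient-wise, which is exactly the retraction of $\K[\vec{x}]$ onto $\F[\vec{x}]$ along your complement $V$; your ``$V$ is closed under multiplication by $\F[\vec{x}]$'' is the same observation the paper makes via the identity $\pi(\Coeff_{m_1}(g_i)\Coeff_{m_2}(h_i)) = \Coeff_{m_1}(g_i)\pi(\Coeff_{m_2}(h_i))$. Your direct-sum packaging (and the faithful-flatness remark) is a perfectly good alternative phrasing, but the underlying idea and the key step are the same.
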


\begin{proof}
	The inclusion $I_\F \subseteq I_{\K} \cap \F[X]$ is immediate.
	For the other direction, let $\set{v_1,v_2,\ldots}$ be a basis of $\K$ as a vector space over $\F$ with the additional property that $v_1$ spans $\F$.
	Consider the linear projection $\pi : \K \to \F$ that sends $v_1$ to itself and $v_i$ to zero for $i \ge 2$.
	We extend $\pi$ to a projection $\pi : \K[X] \to \F[X]$ by applying the projection from $\K$ to $\F$ coefficient-wise.
	Let $f(X) \in I_{\K} \cap \F[X]$ be given by $f(X) = \sum_{i=1}^N g_i(X) h_i(X)$.
	We claim that $f(X) = \sum_{i=1}^N g_i(X) \pi(h_i(X))$, which proves $f(X) \in I_\F$ as desired.

	To see this, let $m$ be a monomial and consider the coefficient $\Coeff_m(f)$ of $m$ in $f$.
	Because $\Coeff_m(f) \in \F[X]$, we have $\pi(\Coeff_m(f)) = \Coeff_m(f)$.
	Using the fact that $\pi$ is $\F$-linear, this implies
	\begin{align*}
		\Coeff_m(f) &= \pi(\Coeff_m(f)) \\
		&= \sum_{i=1}^N \pi(\Coeff_m(g_i h_i)) \\
		&= \sum_{i=1}^N \sum_{m = m_1 m_2} \pi(\Coeff_{m_1}(g_i) \Coeff_{m_2}(h_i)) \\
		&= \sum_{i=1}^N \sum_{m = m_1 m_2} \Coeff_{m_1}(g_i) \pi(\Coeff_{m_2}(h_i)) \\
		&= \sum_{i=1}^N \Coeff_m(g_i \pi(h_i)),
	\end{align*}
	where the inner sum is over all monomials $m_1$ and $m_2$ whose product is $m$.
	The equality 
	\[
		\pi(\Coeff_{m_1}(g_i) \Coeff_{m_2}(h_i)) = \Coeff_{m_1}(g_i) \pi(\Coeff_{m_2}(h_i))
	\]
	follows from the fact that $\pi$ is $\F$-linear and $\Coeff_{m_1}(g_i) \in \F$.
	Thus, $f(X) = \sum_{i=1}^N g_i(X) \pi(h_i(X))$.
\end{proof}

One can use \autoref{lem:vanish on matrix generator} to design PIT algorithms for circuit classes $\mathcal{C}$ that are too weak to efficiently compute a nonzero element of $\detideal{n}{m}{r}$.
If every small $\mathcal{C}$-circuit cannot compute a nonzero element of $\detideal{n}{m}{r}$, then \autoref{lem:vanish on matrix generator} implies that the map $\mathcal{G}_{n,m,r-1}(Y,Z)$ of \autoref{cons:matrix generator} is a hitting set generator for the class of small $\mathcal{C}$-circuits.

\subsection{Complexity of Matrix Multiplication}

This subsection introduces the language of tensors and their relationship with the complexity of matrix multiplication.
For a more thorough treatment of tensors and matrix multiplication, we refer the reader to \textcite[Chapters 14 and 15]{BCS97} and \textcite{Blaser13}.

For our purposes, a tensor $T$ of order $d$ is a set-multilinear polynomial in $d$ disjoint sets of variables $X^{(1)}, \ldots, X^{(d)}$.
The fact that $T$ is set-multilinear means that every monomial appearing in $T$ is a product of $d$ variables, where exactly one of these variables is taken from each of the sets $X^{(1)},\ldots,X^{(d)}$.
That is, we can write $T$ as
\[
	T(X^{(1)},\ldots,X^{(d)}) = \sum_{i_1=1}^{n_1} \cdots \sum_{i_d=1}^{n_d} t_{i_1,\ldots,i_d} x^{(1)}_{i_1} \cdots x^{(d)}_{i_d}.
\]

We say that a tensor is rank-one if there are linear forms $\ell_1(X^{(1)}), \ldots, \ell_d(X^{(d)})$ such that
\[
	T(X^{(1)}, \ldots, X^{(d)}) = \ell_1(X^{(1)}) \cdots \ell_d(X^{(d)}).
\]
The rank of $T$, written as $\trank(T)$, is the minimal $r$ such that $T$ can be written as a sum of rank-one tensors.
The border rank of $T$, denoted by $\brank(T)$, is the minimal $r$ such that $T$ can be obtained as a limit of rank-$r$ tensors.
More explicitly, a tensor $T$ has border rank $r$ if there are linear forms $\ell_{i,j}(X^{(i)},\eps) \in \F(\eps)[X^{(i)}]$ such that
\[
	\sum_{j=1}^r \prod_{i=1}^d \ell_{i,j}(X^{(i)},\eps) = T(X^{(1)},\ldots,X^{(d)}) + O(\eps)
\]
and there is no such expression for $T + O(\eps)$ involving fewer than $r$ rank-one tensors.

We denote by $\abr{n,m,p}$ the order-3 tensor
\[
	\abr{n,m,p} \coloneqq \sum_{i=1}^n \sum_{j=1}^m \sum_{k=1}^p x_{i,j} y_{j,k} z_{i,k},
\]
which corresponds to the multiplication of an $n \times m$ matrix with an $m \times p$ matrix.
Note that $\abr{n,m,p} = \tr(XYZ^\top)$, a fact that we will use later on.

The complexity of $n \times n \times n$ matrix multiplication is captured by the rank of the tensor $\abr{n,n,n}$ (see, e.g., \cite[Proposition 15.1]{BCS97}).
We now define $\omega$, the exponent of matrix multiplication.

\begin{definition}
	$\omega \coloneqq \inf \set{\tau \in \reals : \trank(\abr{n,n,n}) \le O(n^{\tau})}$.
\end{definition}

\textcite{Bini80} showed that one can equivalently define $\omega$ in terms of the border rank of $\abr{n,n,n}$.

\begin{lemma}[\cite{Bini80}]
	$\omega = \inf \set{\tau \in \reals : \brank(\abr{n,n,n}) \le O(n^{\tau})}$.
\end{lemma}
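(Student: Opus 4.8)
The plan is to establish the two inequalities $\omega \le \inf\set{\tau : \brank(\abr{n,n,n}) \le O(n^\tau)}$ and $\omega \ge \inf\set{\tau : \brank(\abr{n,n,n}) \le O(n^\tau)}$ separately, where $\omega$ is defined via the rank $\trank(\abr{n,n,n})$. Write $\omega_R$ for the rank exponent (i.e.\ the quantity in the definition of $\omega$) and $\omega_{\underline{R}}$ for the border rank exponent on the right-hand side. Since $\brank(T) \le \trank(T)$ for every tensor $T$, any $\tau$ admissible for the rank definition is admissible for the border rank definition, so $\omega_{\underline{R}} \le \omega_R = \omega$. The content of the lemma is the reverse inequality $\omega \le \omega_{\underline{R}}$, i.e.\ that an upper bound on border rank can be converted, with arbitrarily small loss in the exponent, into an upper bound on rank.

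For the hard direction I would use the standard two-step argument. First, a \emph{tensor power / substitution} step: if $\brank(\abr{n,n,n}) \le r$, then because $\abr{n,n,n}^{\otimes k} = \abr{n^k,n^k,n^k}$ and border rank is submultiplicative under tensor product, $\brank(\abr{n^k,n^k,n^k}) \le r^k$. This by itself only controls border rank of larger matrix multiplication tensors, not rank. Second, the \emph{degeneration-to-rank} step: a border rank-$r$ expression over $\F(\eps)$ is, up to clearing denominators, a polynomial identity in $\eps$ in which the coefficient of $\eps^0$ recovers $\abr{n,n,n}$ exactly as a sum of $r$ rank-one terms scaled by Laurent monomials in $\eps$; extracting the coefficient of a fixed power of $\eps$ realizes $\abr{n,n,n}$ (or a tensor containing it) as an \emph{honest} sum of rank-one tensors, but the number of terms blows up by a factor polynomial in the $\eps$-degree $e$ of the expression. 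The upshot (this is Bini's observation, essentially Strassen's theory of degeneration) is $\trank(\abr{n,n,n}) \le (e+1)\cdot \brank(\abr{n,n,n})$, or more usefully, applied to the $k$-th power, $\trank(\abr{n^k,n^k,n^k}) \le \mathrm{poly}(k) \cdot r^k$ where $r = \brank(\abr{n,n,n})$ and the polynomial factor in $k$ absorbs the growth of the $\eps$-degree under tensoring.

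Combining: if $\brank(\abr{n,n,n}) \le O(n^\tau)$ for all $\tau > \omega_{\underline{R}}$, fix such a $\tau$ and $n$ with $\brank(\abr{n,n,n}) \le n^\tau$. Then for every $k$, $\trank(\abr{n^k,n^k,n^k}) \le \mathrm{poly}(k)\cdot n^{\tau k} = \mathrm{poly}(k) \cdot (n^k)^\tau$. Since $\trank(\abr{N,N,N})$ is defined for $N$ a perfect $k$-th power of $n$ and rank is monotone/sub-additive enough to interpolate to all $N$ (pad with an identity block, losing only a constant factor in $N$), taking $k \to \infty$ shows that the $\mathrm{poly}(k)$ factor contributes nothing to the exponent, hence $\trank(\abr{N,N,N}) \le O(N^{\tau + \eps'})$ for every $\eps' > 0$, giving $\omega = \omega_R \le \tau$. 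As this holds for all $\tau > \omega_{\underline{R}}$, we conclude $\omega \le \omega_{\underline{R}}$, and with the easy direction, $\omega = \omega_{\underline{R}}$.

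The main obstacle is making the degeneration step precise: one must argue carefully that a rank-$r$ border expression with parameter $\eps$ can be turned into an exact rank expression for $\abr{n,n,n}$ itself (not merely a restriction or a tensor containing extra junk) with only a $\mathrm{poly}(\eps\text{-degree})$ overhead, and then that this overhead grows only polynomially in $k$ when the construction is applied to the $k$-th tensor power. In a fully self-contained treatment this is the heart of Bini's theorem; here, since the lemma is quoted with a citation to \textcite{Bini80}, I would simply invoke that the interpolation argument (coefficient extraction in $\eps$) yields $\trank(T) \le (\deg_\eps + 1)\brank(T)$ and that $\deg_\eps$ of the $k$-fold power of a fixed border expression is $O(k)$, and then run the limiting argument above.
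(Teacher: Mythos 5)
The paper does not give its own proof of this lemma; it is stated as an imported result with a citation to \textcite{Bini80}, so there is no internal argument to compare against. Your reconstruction of Bini's argument is correct. The easy inequality (border-rank exponent $\le \omega$) follows immediately from $\brank \le \trank$. For the reverse inequality, the key tool is exactly the one you invoke: after clearing denominators, a border-rank-$r$ decomposition of a tensor $T$ is a polynomial identity in $\eps$ of some degree $e$ whose lowest-order coefficient is $T$, and Lagrange interpolation at $e+1$ distinct values of $\eps$ expresses $T$ as a linear combination of $e+1$ evaluations, each a sum of $r$ rank-one tensors, giving $\trank(T) \le (e+1)\brank(T)$. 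You then correctly combine this with $\abr{n,n,n}^{\otimes k} = \abr{n^k,n^k,n^k}$, submultiplicativity of border rank under tensor product, and the fact that the $\eps$-degree of the $k$-th tensor power of a fixed decomposition grows only linearly in $k$; the $\mathrm{poly}(k)$ interpolation overhead then disappears in the exponent as $k \to \infty$, and the padding step handles general $N$ not of the form $n^k$. The one blemish is the parenthetical ``(or a tensor containing it)'' in your degeneration step: interpolation recovers $T$ exactly, not a strictly larger tensor, so that hedge is unnecessary and would, if taken seriously, introduce a restriction argument you do not actually need.
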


As mentioned in the introduction, the obvious bounds on $\omega$ are $2 \le \omega \le 3$.
The best-known upper bound on $\omega$ is due to \textcite{AV21}.

\begin{theorem}[\cite{AV21}]
	$\omega < 2.37286$.
\end{theorem}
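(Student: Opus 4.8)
The plan is to prove the bound via Strassen's laser method applied to a high tensor power of the Coppersmith--Winograd tensor, together with the refinements of \textcite{AV21}. At a high level, one fixes a ``base'' tensor $T$ of small border rank whose structure makes it amenable to degeneration into a large direct sum of independent matrix multiplication tensors; one then takes $T^{\otimes N}$ for large $N$, degenerates it into many independent copies of $\langle a,b,c\rangle$-type tensors, and feeds the result into Sch\"onhage's asymptotic sum inequality to extract an upper bound on $\omega$. The bound $2.37286$ arises from optimizing the parameters of this construction numerically.

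Concretely, I would take $T = CW_q$, the Coppersmith--Winograd tensor on three variable sets of size $q+2$, given by
\[
	CW_q = \sum_{i=1}^q (x_0 y_i z_i + x_i y_0 z_i + x_i y_i z_0) + x_0 y_0 z_{q+1} + x_0 y_{q+1} z_0 + x_{q+1} y_0 z_0,
\]
which has border rank $q+2$. The crucial point is that if one groups the variables of each set into the three blocks indexed by $0$, $\{1,\dots,q\}$, and $q+1$, then $CW_q$ is a sum of ``component'' subtensors, each isomorphic to a tiny matrix multiplication tensor, and the block-index triples $(i,j,k)$ that appear all lie on the single level set $i+j+k=2$. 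Passing to $CW_q^{\otimes N}$ and the induced block partition, each surviving block triple is a Kronecker product of matrix multiplication tensors --- hence again a matrix multiplication tensor --- and the surviving triples again satisfy a level-set constraint. This is exactly the setup in which the laser method operates.

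The laser method then proceeds in three steps: (i) choose a probability distribution over the ``types'' of length-$N$ block sequences, isolating a dominant family of block triples of roughly equal value; (ii) zero out variables --- using a random hashing argument together with a Salem--Spencer set to avoid creating three-term ``collisions'' --- so that the surviving block triples form a genuine direct sum with no interference, at the cost of only a subexponential loss; and (iii) apply the asymptotic sum inequality to this direct sum of independent matrix multiplications to bound $\omega$. The refinement of \textcite{AV21} enters in steps (ii)--(iii): rather than bounding the zeroing-out loss crudely, one tracks precisely the ``holes'' --- block positions pinned to a fixed value by the hashing step --- and exploits correlations between the holes in the three tensor factors, yielding a sharper lower bound on the value of the degenerated tensor as a function of the chosen distribution. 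This bound is phrased as a constrained nonlinear optimization problem (a maximum of entropy-type expressions subject to marginal and level-set constraints); solving it for $CW_5$ raised to a suitable power (the eighth power already suffices for the stated bound) produces $\omega < 2.37286$.

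The main obstacle is step (ii): one must show that variables can be zeroed out so that simultaneously (a) an exponentially large number of ``good'' block triples survive and (b) the survivors are pairwise independent, all while losing only a $2^{o(N)}$ factor in the count. This is where the level-set structure of $CW_q^{\otimes N}$, the existence of large progression-free sets, and the refined hole-counting of \textcite{AV21} are all essential; once the resulting value bound is in hand, step (iii) and the numerical optimization are routine in principle, though delicate to carry out to the claimed precision.
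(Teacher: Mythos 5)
The paper does not prove this theorem at all; it states the bound as a citation to \textcite{AV21} and uses it as a black box, so there is no internal proof to compare your sketch against. What you have written is a reasonable bird's-eye summary of the argument in \textcite{AV21} itself: the Coppersmith--Winograd tensor $CW_q$ and its border rank $q+2$, the block/level-set structure at level $i+j+k=2$, the laser method with Salem--Spencer hashing to degenerate $CW_q^{\otimes N}$ into a disjoint sum of matrix multiplication tensors, Sch\"onhage's asymptotic sum inequality, and the refined ``hole'' accounting that distinguishes \textcite{AV21} from \textcite{LG14}. The logical scaffolding is accurate. But it is not a proof of the specific inequality $\omega < 2.37286$: the entire content of that numeral lives in the constrained entropy optimization that you defer to as ``routine in principle,'' and your assertion that the eighth power of $CW_5$ already achieves the stated bound is a concrete numerical claim that you neither compute nor cite --- \textcite{AV21}'s record figure comes from analyzing considerably higher tensor powers, and it is far from obvious that the eighth power crosses the $2.37286$ threshold even under the refined method. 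To certify the constant you would have to exhibit an explicit distribution over block types, carry out the hole-counting to bound the value of the degenerated tensor, and solve the resulting optimization to the claimed precision; none of that appears in the sketch. For the purposes of the paper under review this is moot, since the result is simply imported, but as a proof attempt the sketch stops exactly where the real work begins.
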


It is popularly conjectured that $\omega = 2$.
There has been some progress on lower bounds for $\brank(\abr{n,n,n})$, with the best-known lower bound due to \textcite{LM18}.

\begin{theorem}[\cite{LM18}] \label{thm:brank lb}
	$\brank(\abr{n,n,n}) \ge 2n^2 - \log_2 n - 1$.
\end{theorem}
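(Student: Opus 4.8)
The plan is to follow the strategy of \textcite{LM18}, which combines the \emph{Koszul flattening} lower bound with the \emph{border substitution method}. First I recall the Koszul flattening bound. Given a tensor $T \in A \otimes B \otimes C$ with $\dim A = a$, regard $T$ as a linear map $T_B \colon B^* \to A \otimes C$, and for $1 \le p \le a-1$ form the $p$-th Koszul flattening $T^{\wedge p} \colon \Lambda^p A \otimes B^* \to \Lambda^{p+1} A \otimes C$, namely $\mathrm{id}_{\Lambda^p A} \otimes T_B$ followed by the wedge map $\Lambda^p A \otimes A \to \Lambda^{p+1} A$ tensored with $\mathrm{id}_C$. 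A rank-one tensor $\alpha \otimes \beta \otimes \gamma$ induces a map with image $(\alpha \wedge \Lambda^p A) \otimes \gamma$, of dimension exactly $\binom{a-1}{p}$; since the rank of a linear map is subadditive and the locus of matrices of bounded rank is closed, any border rank $r$ expression for $T$ gives $\mathrm{rank}(T^{\wedge p}) \le r\binom{a-1}{p}$. Hence $\brank(T) \ge \mathrm{rank}(T^{\wedge p})/\binom{a-1}{p}$, and the task becomes to choose $p$ and to compute this rank.

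Second, I would specialize to $T = \abr{n,n,n}$, with $A$, $B$, $C$ each the space of $n \times n$ matrices (of dimension $n^2$), and take $p$ slightly below $(n^2-1)/2$. The map $T^{\wedge p}$ is equivariant under the natural $\GL_n \times \GL_n$ action, so its rank can be found by decomposing the source and target into irreducible modules and identifying which isotypic components lie in the kernel. Carrying out this computation — in essence the Landsberg--Ottaviani argument — shows that the corank of $T^{\wedge p}$ is small enough that the inequality above already yields $\brank(\abr{n,n,n}) \ge 2n^2 - n$.

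Third, to sharpen the additive term from $-n$ to $-\log_2 n - 1$, I would invoke the border substitution method. Its key assertion is that a border rank $r$ expression for a concise tensor $T$ can be rescaled and taken to a limit so that one rank-one summand has its $A$-component equal to a prescribed generic vector $\alpha \in A$; restricting $T$ along $\alpha^\perp$ (passing to a codimension-one subspace of the first factor) then annihilates that summand and leaves a tensor of border rank at most $r - 1$. Iterating this, possibly in several factors, reduces $\abr{n,n,n}$ to a restriction $T'$ supported on smaller spaces with $\brank(\abr{n,n,n}) \ge \brank(T') + q$, where $q$ is the number of substitutions performed; one then applies the Koszul flattening bound to $T'$. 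The gain over $2n^2 - n$ comes from the fact that the Koszul flattening bound behaves asymmetrically in the three factors and degrades only gently under a well-chosen schedule of restrictions, so that $q$ plus the flattening bound for $T'$ can be pushed above $2n^2 - n$; tracking the optimal schedule recursively brings the loss down to $\log_2 n + O(1)$.

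The hard part is twofold. The heavier computational step is the precise rank (equivalently corank) computation for the Koszul flattening of $\abr{n,n,n}$ and of its restrictions: this is concrete but intricate representation theory, and extracting the leading constant $2n^2$ rather than something smaller requires pinning the corank down exactly. The subtler step is the limiting argument behind border substitution: one must verify that the degeneration isolating a rank-one summand with a prescribed $A$-component genuinely preserves the border rank estimate, not merely the rank estimate, which amounts to carefully controlling the $\eps$-dependence of the degenerating family. Making these two ingredients interact cleanly — so that the substitution schedule and the choice of flattening are compatible and the bookkeeping actually nets the claimed $\log_2 n$ improvement — is where the real difficulty lies.
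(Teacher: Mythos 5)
The paper does not prove this theorem; it is quoted as an external result from \textcite{LM18}, so there is no in-paper proof to compare your attempt against. At the level of detail you give, your sketch is a fair account of the Landsberg--Michałek strategy: the $p$-th Koszul flattening combined with a $\GL_n\times\GL_n$-equivariant corank computation is exactly the Landsberg--Ottaviani argument giving $2n^2-n$, and the further improvement to $2n^2-\log_2 n-1$ does come from pairing Koszul flattenings with the border substitution method. Your derivation of the inequality $\brank(T)\ge \mathrm{rank}(T^{\wedge p})/\binom{a-1}{p}$ is correct, including the dimension count $\binom{a-1}{p}$ for the image of the map induced by a rank-one tensor and the appeal to semicontinuity of matrix rank to pass from rank to border rank.

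One caution on your third step. As you state it, the border substitution step is a bit too strong: for exact rank, restricting to $\alpha^\perp$ kills the first summand only when $\alpha(a_1)=0$, and $a_1$ is not under your control; the substitution method instead argues that \emph{some} hyperplane works, and one must then control the flattening rank of whatever restriction results. For border rank the difficulty is sharper, because the summands $a_1(\eps)\otimes b_1(\eps)\otimes c_1(\eps)$ need not converge, and the actual content of the Landsberg--Michałek normal-form lemma is precisely that a border rank decomposition of a concise tensor can be degenerated to regain this control. Your explanation of where the $\log_2 n$ comes from (an asymmetric flattening bound surviving a well-chosen schedule of one-dimensional restrictions) is qualitatively in the right direction, but the bookkeeping that actually produces the $\log_2 n + 1$ loss is the substance of the theorem and is elided here. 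For a blind sketch of a cited result this is reasonable, but a complete proof would require stating and proving the border substitution normal form precisely and carrying out the restricted flattening rank computation.
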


One can also consider the multiplicative complexity of matrix multiplication, where we do not restrict ourselves to computing variable-disjoint products of the form $\ell_1(X) \ell_2(Y)$, but instead consider products $\ell_1(X,Y) \ell_2(X,Y)$ of arbitrary linear polynomials.
The following lemma shows that for matrix multiplication, border rank and border multiplicative complexity differ by at most a factor of 2.
In the case of (exact) rank and multiplicative complexity, this is a well-known fact (see, e.g., \cite[Eqn.~14.8]{BCS97} and the discussion preceding it).
The proof for the case of border computation is nearly identical.

\begin{lemma}[{cf.~\cite[Eqn.~14.8]{BCS97}}] \label{lem:MM border rank vs mult complexity}
	Let $\bmult(n)$ denote the border multiplicative complexity of $n \times n \times n$ matrix multiplication.
	Then $\bmult(n) \le \brank(\abr{n,n,n}) \le 2 \bmult(n)$.
\end{lemma}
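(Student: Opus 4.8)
The plan is to establish the two inequalities $\bmult(n) \le \brank(\abr{n,n,n})$ and $\brank(\abr{n,n,n}) \le 2\bmult(n)$ separately, mirroring the classical argument for exact rank versus multiplicative complexity but carrying the error term $O(\eps)$ along throughout. Both directions amount to translating between a border rank decomposition of the tensor $\abr{n,n,n}$ and a border-computing circuit for the bilinear map $(X,Y) \mapsto XY$, keeping track of how many multiplication gates each representation requires.

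First I would prove $\bmult(n) \le \brank(\abr{n,n,n})$. Suppose $\brank(\abr{n,n,n}) = r$, so over $\F(\eps)$ there are linear forms $\ell_j(X,\eps)$ in the $x$-variables, $\ell'_j(Y,\eps)$ in the $y$-variables, and a linear functional picking out the $z$-coefficients such that $\sum_{j=1}^r \ell_j(X,\eps)\ell'_j(Y,\eps)$, read coefficient-wise against the $z_{i,k}$, equals $\abr{n,n,n} + O(\eps)$; concretely the $(i,k)$ entry of the product $XY$ is obtained (up to $O(\eps)$) as an $\F(\eps)$-linear combination of the products $\ell_j(X,\eps)\ell'_j(Y,\eps)$. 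This yields a circuit over $\F(\eps)$ that border computes all $n^2$ entries of the matrix product using exactly $r$ multiplication gates (the additions and scalar multiples are free), so $\bmult(n) \le r$. Here I must be a little careful: border multiplicative complexity is defined via a single circuit border computing the (multi-output) matrix product, and $\abr{n,n,n} = \tr(XYZ^\top)$ is a single polynomial, so I should note that a border rank decomposition of the trace tensor gives, by reading off the coefficient of each $z_{i,k}$, a simultaneous border computation of all entries of $XY$ — this is exactly the standard equivalence between the order-3 matrix multiplication tensor and the bilinear map of matrix multiplication, now in the approximative regime.

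For the reverse inequality $\brank(\abr{n,n,n}) \le 2\bmult(n)$, suppose a circuit $\Phi$ over $\F(\eps)$ border computes the matrix product $XY$ using $s = \bmult(n)$ multiplication gates. Restricting attention to the multiplication gates, each gate computes a product of two polynomials in $\F(\eps)[X,Y]$; since the output entries $(XY)_{i,k} + O(\eps)$ are polynomials that are bilinear in $(X,Y)$ (up to the $O(\eps)$ slack, and one can discard higher-degree-in-$\eps$ junk), a homogenization/linearization argument lets one replace each general multiplication $g \cdot h$ by a bounded number of multiplications of the form (linear in $X$, plus linear in $Y$) times (linear in $X$, plus linear in $Y$) — and then the identity $ab = \tfrac{1}{4}\big((a+b)^2 - (a-b)^2\big)$, together with expanding each square, converts one general product into two products each of which is a product of a pure-$X$ linear form with a pure-$Y$ linear form, at the cost of a factor of $2$. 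Summing over the $s$ multiplication gates and reading off the resulting set-multilinear decomposition of $\tr(XYZ^\top) + O(\eps)$ gives a border rank decomposition with at most $2s$ rank-one terms, i.e. $\brank(\abr{n,n,n}) \le 2\bmult(n)$. The factor-$2$ bookkeeping in this step is precisely the classical computation behind \cite[Eqn.~14.8]{BCS97}.

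The main obstacle is not any single inequality but rather verifying that the classical equivalences survive verbatim in the border setting — in particular, that working over $\F(\eps)$ and tolerating the $O(\eps)$ error does not interfere with the homogenization step that strips a general circuit down to its essentially-bilinear core. The key point is that $\F(\eps)$ is a field of characteristic zero, so the usual tricks (extracting homogeneous components by interpolation in a scaling parameter, the polarization identity $ab = \tfrac14((a+b)^2-(a-b)^2)$) all go through, and the $\eps$-error terms can be absorbed without changing the multiplication count. As the excerpt already remarks, ``the proof for the case of border computation is nearly identical'' to the exact case, so I would present the exact argument and flag the two places ($\F(\eps)$ in place of $\F$, and ``$+O(\eps)$'' appended to each identity) where the adaptation occurs.
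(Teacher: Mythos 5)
The paper itself supplies no proof of this lemma; it simply cites \textcite[Eqn.~14.8]{BCS97} and remarks that the border version is ``nearly identical,'' so your task is a reconstruction rather than a comparison. Your reconstruction is correct and is the standard argument carried over $\F(\eps)$. The first direction is exactly right: reading off the coefficient of each $z_{i,k}$ in a border rank-$r$ decomposition of $\tr(XYZ^\top)$ gives a circuit over $\F(\eps)$ with $r$ multiplication gates that border computes all $n^2$ entries of $XY$. For the second direction, the polarization identity $ab = \tfrac14\big((a+b)^2 - (a-b)^2\big)$ is a valid but slightly roundabout way to get the factor of two: after extracting the bilinear part, each square of an affine form $\alpha + u(X) + v(Y)$ contributes only the single cross term $2\,u(X)v(Y)$, so the two squares give two rank-one terms. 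The more common presentation skips polarization entirely and just expands $\big(u_1(X)+v_1(Y)\big)\big(u_2(X)+v_2(Y)\big)$, reading off the bilinear part $u_1(X)v_2(Y) + u_2(X)v_1(Y)$ as a sum of two rank-one tensors directly. Either route is fine. Your closing observation---that $\F(\eps)$ has characteristic zero, so the homogenization and interpolation steps from the exact case go through verbatim and the $O(\eps)$ error is carried along without increasing the multiplication count---is precisely the point the paper gestures at, and you state it correctly.
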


\section{Lifting Border Rank Lower Bounds to Determinantal Ideals} \label{sec:trace abp}

In this section, we will show that lower bounds on the border rank of matrix multiplication can be lifted to lower bounds on the border multiplicative complexity of any nonzero polynomial in the ideal $\detideal{n}{m}{r} \subseteq \F[X]$.
Letting $\brank(n) \coloneqq \brank(\abr{n,n,n})$ be the border rank of $n \times n \times n$ matrix multiplication, our goal will be to prove a lower bound of order $\brank(r)$ on the border multiplicative complexity of the ideal $\detideal{n}{m}{r}$.
To do this, we make use of tools recently developed by \textcite{AF22} to prove lower bounds on the complexity of polynomials in this ideal.

We now state and prove our main technical lemma, which is an analogue of \cite[Lemma 3.6]{AF22} for trace ABPs.

\begin{lemma} \label{lem:trace abp projection}
	Let $\F$ be a field of characteristic zero.
	Let $X^{(1)},\ldots,X^{(m)}$ be matrices of variables, where $X^{(i)}$ is an $n_i \times n_{i+1}$ matrix and $n_1 = n_{m+1}$.
	Let $N \coloneqq \sum_{i=1}^{m+1} n_i$.
	Let $\sigma = (\sigma_1,\ldots,\sigma_p)$ be a non-increasing sequence of natural numbers with $\sigma_1 \ge N$.
	Then there is a matrix $M \in \F(\eps)[X^{(1)},\ldots,X^{(m)}]^{\sigma_1 \times \sigma_1}$ where each entry $M_{i,j}$ is either a constant or a scalar multiple of a variable and we have
	\[
		\prod_{i=1}^p \det(M_{[\sigma_i],[\sigma_i]}) = 1 + \eps \tr(X^{(1)} \cdots X^{(m)}) + O(\eps^2).
	\]
\end{lemma}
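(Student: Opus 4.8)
The plan is to realize the polynomial $\tr(X^{(1)}\cdots X^{(m)})$ as a weighted sum of cycle covers in a suitable digraph, and then express that sum as a single determinant (of size roughly $N$), scaled by $\eps$ so that the "empty" cycle cover contributes the constant $1$. Concretely, I would build a digraph $D$ on vertex set $[\sigma_1]$, where a distinguished block of $N = \sum n_i$ vertices is partitioned into consecutive groups $B_1, \ldots, B_m$ of sizes $n_1, \ldots, n_m$ (thinking of $B_{m+1}$ as identified with $B_1$ since $n_{m+1} = n_1$), and place a directed edge from the $a$-th vertex of $B_i$ to the $b$-th vertex of $B_{i+1}$ weighted by $\eps^{[i=1]}\cdot X^{(i)}_{a,b}$ (so exactly one factor of $\eps$ is picked up per cycle that traverses the "layered" structure once). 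Every other vertex — including the $\sigma_1 - N$ padding vertices — gets a self-loop of weight $1$. The key point is that the only simple cycles through the block $B_1 \cup \cdots \cup B_m$ are exactly the length-$m$ cycles $B_1 \to B_2 \to \cdots \to B_m \to B_1$, and a cycle cover of $D$ therefore consists of a choice of one such cycle (or of self-loops on all of $B_1\cup\cdots\cup B_m$) together with self-loops on all padding vertices; the signed weighted sum of these cycle covers is precisely $1 + \eps\,\tr(X^{(1)}\cdots X^{(m)})$.

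Having fixed the digraph, **the next step** is to invoke Valiant's identity that $\det(M) = \sum_{\text{cycle covers }C}\operatorname{sgn}(C)\prod_{e\in C} M_e$, where $M$ is the weighted adjacency matrix of $D$ (with $M_{i,j}$ the weight of the edge $i \to j$, and $0$ if there is no such edge). By construction each entry $M_{i,j}$ is either $0$, $1$, or a scalar multiple ($1$ or $\eps$) of a single variable $x^{(i)}_{a,b}$, which is exactly the structural requirement in the statement. I must check the sign bookkeeping: a cycle cover using the single $m$-cycle has one cycle of odd-or-even length $m$ plus $N - m + (\sigma_1 - N) = \sigma_1 - m$ fixed points, contributing sign $(-1)^{m-1}$; the all-self-loops cover contributes sign $+1$. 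To make both the constant term and the trace term come out with coefficient $+1$, I will absorb the sign $(-1)^{m-1}$ either by negating one of the edge weights along each $m$-cycle (e.g. multiplying the $B_m \to B_1$ edges by $(-1)^{m-1}$, which keeps them scalar multiples of variables) or, equivalently, by permuting vertex labels; this is the routine calculation I will not belabor.

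**For the product of minors**, I observe that $M_{[\sigma_i],[\sigma_i]}$ for $i \geq 2$ is a principal submatrix that lies entirely outside the block $B_1 \cup \cdots \cup B_m$ once $\sigma_1 \geq N$ forces $B_1 \cup \cdots \cup B_m \subseteq [\sigma_1]$ but — wait, that inclusion is the wrong direction, so I instead arrange the vertex labels so that the padding vertices occupy the $\emph{smallest}$ indices $[1, \sigma_1 - N]$ and the block $B_1\cup\cdots\cup B_m$ occupies $[\sigma_1 - N + 1, \sigma_1]$. Then for every $i \geq 2$ we have $[\sigma_i] \subseteq [\sigma_1]$, and since $\sigma_i < \sigma_1$ but $\sigma_i$ could still be $\geq \sigma_1 - N$... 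The cleanest fix: make the padding block sit at the $\emph{top}$ indices so that $M_{[\sigma_i],[\sigma_i]}$ for $i\ge 2$, being a leading principal submatrix of size $\sigma_i \le \sigma_1 - 1$, may clip into the layered block — but every such clipped submatrix still has the property that any row indexed by a layered vertex whose out-neighbors all lie outside $[\sigma_i]$ contributes only its self-loop $=1$ on the diagonal... Rather than pin this down here, the guiding principle is: choose the embedding of $D$ into $[\sigma_1]$ so that every smaller leading principal block $M_{[\sigma_i],[\sigma_i]}$ is \emph{lower-triangular} (hence has determinant $1$, from the self-loops on the diagonal), which is achieved by placing all the "action" in the last $N$ coordinates and orienting the layering consistently with the coordinate order.

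**The main obstacle** I anticipate is exactly this embedding/ordering argument: simultaneously guaranteeing (a) $\det(M_{[\sigma_1],[\sigma_1]}) = \det(M) = 1 + \eps\tr(X^{(1)}\cdots X^{(m)}) + O(\eps^2)$, and (b) $\det(M_{[\sigma_i],[\sigma_i]}) = 1$ for all $i \geq 2$, using a single matrix $M$ and a single coordinate ordering. The resolution is to reserve the last $N$ coordinates for the digraph $D$ and the first $\sigma_1 - N$ coordinates as "inert" padding carrying only diagonal $1$'s, and additionally to order the vertices of $D$ itself (layer by layer, with $B_1$'s closing edges handled via the sign trick rather than a back-edge in the matrix) so that $M$ restricted to the inert coordinates is the identity and $M$ is "block upper triangular" in the sense that every proper leading principal submatrix misses at least one full layer's worth of forward edges and hence is triangular with unit diagonal; the $O(\eps^2)$ slack absorbs any lower-order garbage. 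Everything else — Valiant's cycle-cover formula, the sign computation, the observation that higher powers of $\eps$ only arise from cycle covers using two or more layered cycles, which is impossible for a single disjoint cover but could arise from... no, actually a single cycle cover uses the layered cycle at most once, so the genuine claim is that $\det M$ has \emph{no} $\eps^{\ge 2}$ terms and the $O(\eps^2)$ is only there for safety — is routine.
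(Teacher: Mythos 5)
Your high-level strategy -- realize $\tr(X^{(1)}\cdots X^{(m)})$ via cycle covers, put a single factor of $\eps$ on the ``closing'' edges, and read off the determinant -- is the same as the paper's. But your treatment of the extra minors $M_{[\sigma_i],[\sigma_i]}$ for $i\ge 2$ has a genuine gap, which you partially sense but do not resolve. You propose to choose the ordering so that \emph{every} proper leading principal submatrix is triangular with unit diagonal, hence has determinant $1$. This is impossible: the only non-loop cycles in the digraph pass through all $m$ layers exactly once, so to kill all such cycles in $M_{[\sigma_i],[\sigma_i]}$ you would need $\sigma_1 - \sigma_i$ to exceed the size of the last layer in your ordering. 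If $\sigma_i$ is within $n_{\mathrm{last}}$ of $\sigma_1$ (which can certainly happen, since $\sigma$ is only assumed non-increasing with $\sigma_1 \ge N$), then $M_{[\sigma_i],[\sigma_i]}$ still contains back-edges and its determinant has a nontrivial $\eps$-linear term of the form $\pm\eps\,\tr\bigl((X^{(1)}\cdots X^{(m)})_{[\ell],[\ell]}\bigr)$ for some $\ell < n_1$. Furthermore you cannot ``handle $B_1$'s closing edges via the sign trick rather than a back-edge'' -- without a back-edge the matrix is genuinely triangular and the $\eps$-linear term you need in $\det M$ disappears entirely.

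The paper confronts this head-on: it computes $\det(M'_{[k],[k]})$ for \emph{every} $k$, finds that for $k > N - n_m$ the answer is $1 + (-1)^{m+1}\eps\,\tr\bigl((X^{(1)}\cdots X^{(m)})_{[k-(N-n_m)],[k-(N-n_m)]}\bigr) + O(\eps^2)$, multiplies these out over all the $\sigma_i$ (tracking the multiplicities $a_k = |\{j : \sigma_j = k\}|$), and then performs a change of variables $X^{(1)} \mapsto (-1)^{m+1}A X^{(1)}$, where $A$ is a diagonal matrix with entries $A_{i,i}^{-1} = \sum_{\ell \ge i} a_{N - n_m + \ell}$, to rescale the $i$-th diagonal entry of $X^{(1)}\cdots X^{(m)}$ so the final coefficient comes out to exactly $1$. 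This rescaling is the key step you are missing. A secondary error: you assert that $\det M$ has no $\eps^{\ge 2}$ terms because ``a single cycle cover uses the layered cycle at most once.'' This is false -- if the layers have width greater than $1$, a single cycle cover may contain two or more vertex-disjoint layered cycles, each contributing a factor of $\eps$, so the $O(\eps^2)$ term is genuinely present, not a safety margin.
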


\begin{proof}
	Without loss of generality, it suffices to consider the case where $\sigma_1 = N$.
	If instead $\sigma_1 > N$, we extend the matrix $M$ to a $\sigma_1 \times \sigma_1$ matrix by placing ones along the main diagonal and zeroes elsewhere.

	Let $G$ be the underlying directed graph of the trace ABP that computes $\tr(X^{(1)} \cdots X^{(m)})$.
	We modify $G$ as follows:
	\begin{itemize}
		\item
			Add a self-loop of weight $1$ to every vertex of $G$.
		\item
			Let $s_1,\ldots,s_{n_1}$ denote the sources of $G$ and $t_1,\ldots,t_{n_1}$ the corresponding sinks. 
			Add an edge of weight $\eps$ from $t_i$ to $s_i$ for every $i \in [n_1]$.
	\end{itemize}
	Let $G'$ denote the resulting graph and let $M'$ be the adjacency matrix of $G'$, i.e.,
	\[
		M' \coloneqq 
		\begin{pmatrix}
			I_{n_1} & X^{(1)} & 0 & 0 & \cdots & 0 & 0 \\
			0 & I_{n_2} & X^{(2)} & 0 & \cdots & 0 & 0 \\
			0 & 0 & I_{n_3} & X^{(3)} & \cdots & 0 & 0 \\
			0 & 0 & 0 & I_{n_4} & \cdots & 0 & 0 \\
			\vdots & \vdots & \vdots & \vdots & \ddots & \vdots & \vdots \\
			0 & 0 & 0 & 0 & \cdots & I_{n_{m-1}} & X^{(m)} \\
			\eps I_{n_1} & 0 & 0 & 0 & \cdots & 0 & I_{n_m} \\
		\end{pmatrix}.
	\]
	We will first determine $\prod_{i=1}^p \det(M'_{[\sigma_i],[\sigma_i]})$, after which we will modify $M'$ to obtain the desired matrix $M$.

	Fix some $k \in [N]$.
	If $k \le \sum_{i=1}^{m-1} n_i$, then it is clear that $\det(M'_{[k],[k]}) = 1$, as $M'_{[k],[k]}$ is an upper triangular matrix with ones along the diagonal.
	For $k$ in the range $N-n_m < k \le N$, we compute $\det(M'_{[k],[k]})$ using the cycle cover interpretation of the determinant.

	Let $G'_k$ denote the graph whose adjacency matrix is $M'_{[k],[k]}$.
	Recall that $\det(M'_{[k],[k]})$ can be computed as
	\[
		\det(M'_{[k],[k]}) = \sum_{C \in \mathscr{C}(G'_k)} (-1)^{\mathrm{even}(C)} \pi(C),
	\]
	where $\mathscr{C}(G'_k)$ is the set of all cycle covers in $G'_k$, $\mathrm{even}(C)$ is the number of even cycles in $C$, and $\pi(C)$ is the product of the weights on the edges appearing in the cycle cover $C$.
	We partition the set of cycle covers of $G'_k$ into three sets: those containing no edges of weight $\eps$, those containing exactly one edge of weight $\eps$, and those containing two or more edges of weight $\eps$.
	In each case, we determine the contribution of these cycle covers to $\det(M'_{[k],[k]})$.

	\begin{itemize}
		\item
			Suppose $C$ is a cycle cover with no edges of weight $\eps$.
			The construction of $G'_k$ implies that $C$ must be the cycle cover consisting entirely of self-loops.
			This cycle cover contributes $1$ to $\det(M'_{[k],[k]})$.

		\item
			Let $C$ be a cycle cover containing exactly one edge labeled $\eps$.
			By the construction of $G'_k$, the cycle in $C$ containing the edge labeled $\eps$ must correspond to a path from $s_i$ to $t_i$ in $G$ for some $i \in [k - (N - n_m)]$ together with the $\eps$ edge from $t_i$ to $s_i$.
			Because every non-trivial cycle in $G'_k$ must use an edge labeled $\eps$, the remaining cycles in $C$ consist of self-loops.
			Thus, $C$ contributes a term of the form
			\[
				(-1)^{m+1} \eps X^{(1)}_{i,i_2} X^{(2)}_{i_2,i_3} \cdots X^{(m-1)}_{i_{m-1},i_m} X^{(m)}_{i_m,i}.
			\]
			to $\det(M'_{[k],[k]})$, where the factor of $(-1)^{m+1}$ accounts for the parity of the length of the non-trivial cycle.
			There is exactly one such cycle cover for every $i \in [k - (N-n_m)]$ and every path from $s_i$ to $t_i$ in $G_k$. 
			This implies that the set of all cycle covers containing exactly one edge of weight $\eps$ contributes
			\begin{multline*}
				(-1)^{m+1} \eps \sum_{i_1=1}^{N-n_m+k} \sum_{i_2,\ldots,i_m} X^{(1)}_{i_1,i_2} X^{(2)}_{i_2,i_3} \cdots X^{(m)}_{i_m,i_1} \\ = (-1)^{m+1} \eps \tr((X^{(1)} \cdots X^{(m)})_{[k-(N-n_m)],[k-(N-n_m)]})
			\end{multline*}
			to $\det(M'_{[k],[k]})$.

		\item
			Finally, consider the case when $C$ is a cycle cover containing two or more edges labeled by $\eps$.
			By definition, this cycle cover contributes an $O(\eps^2)$ term to $\det(M'_{[k],[k]})$, which we consider negligible.
	\end{itemize}

	In summary, we have
	\[
		\det(M'_{[k],[k]}) = 1 + (-1)^{m+1} \eps \tr((X^{(1)} \cdots X^{(m)})_{[k-(N-n_m)], [k-(N-n_m)]}) + O(\eps^2).
	\]
	Using this, we now determine $\prod_{i=1}^p \det(M'_{[\sigma_i],[\sigma_i]})$.
	Let $a_i \coloneqq \Abs{\set{j \in [p] : \sigma_j = i}}$ count the number of elements of $\sigma$ equal to $i$.
	The analysis above implies
	\begin{align*}
		\prod_{i=1}^p \det(M'_{[\sigma_i],[\sigma_i]}) &= \prod_{k=1}^N \det(M'_{[k],[k]})^{a_k} \\
		&= \prod_{\ell=1}^{n_m} \det(M'_{[N-n_m+\ell],[N-n_m+\ell]})^{a_{N-n_m+\ell}} \\ 
		&= \prod_{\ell=1}^{n_m} \del{1 + (-1)^{m+1} \eps \tr((X^{(1)} \cdots X^{(m)})_{[\ell],[\ell]}) + O(\eps^2)}^{a_{N-n_m+\ell}} \\
		&= \prod_{\ell=1}^{n_m} \del{1 + (-1)^{m+1} \eps a_{N - n_m + \ell} \tr((X^{(1)} \cdots X^{(m)})_{[\ell],[\ell]}) + O(\eps^2)} \\
		&= 1 + (-1)^{m+1} \eps \sum_{\ell=1}^{n_m} a_{N-n_m+\ell} \tr((X^{(1)} \cdots X^{(m)})_{[\ell],[\ell]}) + O(\eps^2) \\
		&= 1 + (-1)^{m+1} \eps \sum_{i=1}^{n_m} \sbr{ \del{\sum_{\ell=i}^{n_m} a_{N - n_m + \ell}} (X^{(1)} \cdots X^{(m)})_{i,i}} + O(\eps^2).
	\end{align*}

	We now perform a change of variables to transform the matrix $M'$ into the desired matrix $M$.
	Let $A$ be the diagonal matrix given by
	\[
		A_{i,i} = \frac{1}{\sum_{\ell = i}^{n_m}a_{N - n_m + \ell}}.
	\]
	(Note that the entries of $A$ are well-defined, since $a_N \ge 1$ and $a_i \ge 0$ for all $i \in [N]$.)
	Let $M$ be the image of $M'$ under the change of variables $X^{(1)} \mapsto (-1)^{m+1} A X^{(1)}$.
	Then we have
	\begin{align*}
		\prod_{i=1}^p \det(M_{[\sigma_i],[\sigma_i]}) &= 1 + (-1)^{m+1} \eps \sum_{i=1}^{n_m}\sbr{\del{\sum_{\ell=i}^{n_m} a_{N - n_m + \ell}} ((-1)^{m+1} A X^{(1)} \cdots X^{(m)})_{i,i}} + O(\eps^2) \\
		&= 1 + \eps \sum_{i=1}^{n_m}\sbr{A_{i,i}^{-1} A_{i,i} (X^{(1)} \cdots X^{(m)})_{i,i}} + O(\eps^2) \\
		&= 1 + \eps \sum_{i=1}^{n_m} (X^{(1)} \cdots X^{(m)})_{i,i} + O(\eps^2) \\
		&= 1 + \eps \tr(X^{(1)} \cdots X^{(m)}) + O(\eps^2). \qedhere
	\end{align*}
\end{proof}

\begin{remark}
	In the proof of the preceding lemma, suppose we were to add edges of weight 1 from $t_i$ to $s_i$ for each $i$ and add self-loops of weight 1 to all vertices.
	To compute $\tr(X^{(1)} \cdots X^{(d)})$ using the cycle cover interpretation of the determinant, we want to restrict ourselves to only count cycle covers containing a single edge $t_i$-$s_i$ edge.
	We accomplish this by multiplying the weight of each such edge by a factor of $\eps$, which guarantees that the linear term of the determinant of the adjacency matrix corresponds to cycle covers using exactly one $t_i$-$s_i$ edge.
	In fact, we get more: the coefficient of $\eps^k$ in the determinant of the adjacency matrix corresponds to cycle covers using exactly $k$ such edges.

	A similar idea is used in algorithms for ``exact'' problems in combinatorial optimization.
	For example, the algorithms of \textcite{BP87} for counting exact arborescences and exact perfect matchings in planar graphs modify the edge weights of the graph in a manner similar to what we do in the proof of \autoref{lem:trace abp projection}.
	By exploiting the notion of border complexity, we avoid an interpolation step used in these combinatorial algorithms.
\end{remark}

Using the preceding lemma, we establish an analogue of \cite[Theorem 3.8]{AF22} for trace ABPs.

\begin{proposition} \label{prop:trace abp reduction}
	Let $\F$ be a field of characteristic zero.
	Let $f(X) \in \detideal{n}{m}{r}$ be a nonzero polynomial and let $h(X, \eps) \in \F\llb \eps \rrb [X]$ be any polynomial such that $h(X,\eps) = f(X) + O(\eps)$.
	Let $g(\vec{y}) \in \F[\vec{y}]$ be a polynomial in the border of layered trace algebraic branching programs with at most $r$ vertices.
	Then there is a depth-three $h$-oracle circuit $\Phi$ defined over $\F(\eps)$ such that the following hold.
	\begin{enumerate}
		\item
			$\Phi$ has $nm$ addition gates at the bottom layer, a single $h$-oracle gate in the middle layer, and a single addition gate at the top layer.
		\item
			$\Phi$ computes $g(\vec{y}) + O(\eps)$.
	\end{enumerate}
\end{proposition}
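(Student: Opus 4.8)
The plan is to first build a depth-three circuit over $\F(\eps)$ that uses a single \emph{exact} $f$-oracle gate and computes $g(\vec y)+O(\eps)$, and only at the very end to invoke \autoref{lem:exact to approx oracle} to trade the exact $f$-oracle for the approximate $h$-oracle while preserving the circuit's shape. To set up, I would apply \autoref{prop:reduction to single bideterminant} to the nonzero polynomial $f \in \detideal{n}{m}{r}$, obtaining linearly independent linear forms $\ell_{i,j}(X,\eps)$, an integer $q$, a nonzero $\alpha$, and a non-increasing sequence $\sigma_1 \ge \cdots \ge \sigma_p$ with $\sigma_1 \ge r$ (and note $\sigma_1 \le \min(n,m)$), and then rescale $\eps \mapsto \eps^2$ to get
\[
 f(\ell_{1,1}(X,\eps^2),\ldots,\ell_{n,m}(X,\eps^2)) = \eps^{2q}\alpha \prod_{i=1}^p \det_{\sigma_i}(X_{[\sigma_i],[\sigma_i]}) + O(\eps^{2q+2}).
\]
This rescaling is the one genuinely delicate point: it moves the error term up to order $2q+2$, which will end up strictly above the order at which the trace of $g$ appears.

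On the other side, since $g$ lies in the border of layered trace ABPs with at most $r$ vertices, there is a layered trace ABP over $\F(\eps)$ with at most $r$ vertices computing $g(\vec y)+O(\eps)$; writing its transition matrices as $M_1(\vec y,\eps),\ldots,M_d(\vec y,\eps)$ of sizes $n_i \times n_{i+1}$ with $n_1 = n_{d+1}$ and total size $N \coloneqq \sum_{i=1}^{d+1} n_i \le r$, we have $\tr(M_1(\vec y,\eps)\cdots M_d(\vec y,\eps)) = g(\vec y)+O(\eps)$. Since $\sigma_1 \ge r \ge N$, I would feed generic matrices $X^{(1)},\ldots,X^{(d)}$ of these shapes into \autoref{lem:trace abp projection} to get a matrix $\hat M \in \F(\eps)[X^{(1)},\ldots,X^{(d)}]^{\sigma_1 \times \sigma_1}$, each entry a constant or a scalar multiple of a variable, with $\prod_{i=1}^p \det(\hat M_{[\sigma_i],[\sigma_i]}) = 1 + \eps \tr(X^{(1)}\cdots X^{(d)}) + O(\eps^2)$. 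Substituting $X^{(i)} \mapsto M_i(\vec y,\eps)$ turns $\hat M$ into a matrix $M(\vec y,\eps)$ whose entries are affine-linear forms in $\vec y$ (each entry of $\hat M$ being a constant or a scalar times a single variable), and yields $\prod_{i=1}^p \det(M(\vec y,\eps)_{[\sigma_i],[\sigma_i]}) = 1 + \eps\, g(\vec y) + O(\eps^2)$.

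Next I would glue the two pieces together. Using $\sigma_1 \le \min(n,m)$, substitute the generic $n \times m$ matrix $X$ in the rescaled identity above by the $n \times m$ matrix whose top-left $\sigma_1 \times \sigma_1$ block is $M(\vec y,\eps)$ and whose remaining entries are $0$; this sends each $\det_{\sigma_i}(X_{[\sigma_i],[\sigma_i]})$ to $\det(M(\vec y,\eps)_{[\sigma_i],[\sigma_i]})$ and each $\ell_{i,j}(X,\eps^2)$ to an affine-linear form $L_{i,j}(\vec y,\eps)$ in $\vec y$, so that
\[
 f(L_{1,1}(\vec y,\eps),\ldots,L_{n,m}(\vec y,\eps)) = \eps^{2q}\alpha\bigl(1 + \eps\, g(\vec y) + O(\eps^2)\bigr) + O(\eps^{2q+2}) = \eps^{2q}\alpha + \eps^{2q+1}\alpha\, g(\vec y) + O(\eps^{2q+2}),
\]
where the point is that the $O(\eps^{2q+2})$ error inherited from \autoref{prop:reduction to single bideterminant} does not pollute the coefficient of $\eps^{2q+1}$. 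Consequently $\tfrac{1}{\eps^{2q+1}\alpha}\bigl(f(L_{1,1}(\vec y,\eps),\ldots,L_{n,m}(\vec y,\eps)) - \eps^{2q}\alpha\bigr) = g(\vec y)+O(\eps)$, and this is computed by a depth-three circuit $\Psi$ over $\F(\eps)$ with an exact $f$-oracle: $nm$ addition gates at the bottom computing the forms $L_{i,j}$, one $f$-oracle gate in the middle fed those $nm$ values, and one addition gate at the top that scales the oracle's output by $\tfrac{1}{\eps^{2q+1}\alpha}$ and adds the constant $-\tfrac{1}{\eps}$.

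To conclude, I would apply \autoref{lem:exact to approx oracle}, taking the polynomial to be computed to be $g$, the exact oracle to be $f$, and the approximate oracle to be $h$ (using $h(X,\eps)=f(X)+O(\eps)$); this produces a circuit $\Phi$ of the same depth-three shape over $\F(\eps)$ that uses an $h$-oracle gate (queried at $(\,\cdot\,,\eps^N)$ for a suitable $N$) and still computes $g(\vec y)+O(\eps)$, and it also absorbs the routine bookkeeping of the powers of $\eps$ that occur in denominators of the $\ell_{i,j}$, exactly as in the proof of \cite[Theorem 3.8]{AF22}. The main obstacle in the whole argument is calibrating these $\eps$-scalings: without the $\eps \mapsto \eps^2$ rescaling of \autoref{prop:reduction to single bideterminant}, the error term there would land at precisely the order that carries $g$, and the construction would compute $g$ plus an uncontrolled polynomial instead of $g$ itself.
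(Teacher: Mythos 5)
Your proposal is correct and follows essentially the same route as the paper: apply \autoref{prop:reduction to single bideterminant} to $f$, apply \autoref{lem:trace abp projection} to the trace ABP border-computing $g$, substitute the resulting matrix (zero-padded to $n\times m$) into the first identity, and then extract $g(\vec y)+O(\eps)$ by an affine correction; \autoref{lem:exact to approx oracle} handles the exact-to-approximate oracle swap. The only difference is a cosmetic reorganization of the $\eps$-bookkeeping: the paper runs \autoref{lem:trace abp projection} with a fresh indeterminate $\delta$ and performs the substitution $\eps\mapsto\eps^2,\ \delta\mapsto\eps$ only at the end, whereas you pre-compose the bideterminant identity with $\eps\mapsto\eps^2$ and use the same $\eps$ in the lemma, arriving at the identical expression $\eps^{2q}\alpha + \eps^{2q+1}\alpha\, g(\vec y) + O(\eps^{2q+2})$. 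Your final remark correctly identifies why this rescaling (or, equivalently, the paper's two-parameter trick) is forced: without it the $O(\eps^{q+1})$ error from \autoref{prop:reduction to single bideterminant} would collide with the order $\eps^{q+1}$ coefficient carrying $g$.
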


\begin{proof}
	By \autoref{lem:exact to approx oracle}, it is sufficient to consider the case where the oracle gates compute $f(X)$ exactly.
	Using \autoref{prop:reduction to single bideterminant}, there are $nm$ linear functions $\set{\ell_{i,j}(X,\eps) \in \F(\eps)[X] : (i,j) \in [n] \times [m]}$, an integer $q \in \integers$, a nonzero $\alpha \in \F$, and a sequence $\sigma = (\sigma_1,\ldots,\sigma_p)$ of natural numbers with $\sigma_1 \ge r$ such that
	\[
		f(\ell_{1,1}(X,\eps),\ldots,\ell_{n,m}(X,\eps)) = \eps^q \alpha \prod_{i=1}^p \det(X_{[\sigma_i],[\sigma_i]}) + O(\eps^{q+1}).
	\]

	By assumption, there is a polynomial $\tilde{g}(\vec{y},\eps) \in \F(\eps)[\vec{y}]$ such that $\tilde{g}(\vec{y},\eps) = g(\vec{y}) + O(\eps)$ and that $\tilde{g}(\vec{y},\eps)$ can be computed by a layered trace ABP on $s$ vertices for some $s \le r$.
	That is, there are matrices of variables $Z^{(1)}, \ldots, Z^{(m)}$, where $Z^{(i)}$ is an $n_i \times n_{i+1}$ matrix, we have $n_1 = n_{m+1}$, and $\sum_{i=1}^{m+1} = s$, along with a projection $\varphi : Z^{(1)} \cup \cdots \cup Z^{(m)} \to \vec{y} \cup \F(\eps)$ such that $\tr(\varphi(Z^{(1)}) \cdots \varphi(Z^{(m)})) = \tilde{g}(\vec{y},\eps)$.

	Applying \autoref{lem:trace abp projection} to the matrices $Z^{(1)},\ldots,Z^{(m)}$ and the sequence $(\sigma_1,\ldots,\sigma_p)$, we obtain a matrix $M(Z,\eps) \in \F(\eps)[Z^{(1)}, \ldots, Z^{(m)}]^{r \times r}$ such that
	\[
		\prod_{i=1}^p \det(M(Z,\eps)_{[\sigma_i],[\sigma_i]}) = 1 + \eps \tr(Z^{(1)} \cdots Z^{(m)}) + O(\eps^2).
	\]

	We now compose $f(X)$, the linear functions $\ell_{i,j}(X,\eps)$, the matrix $M(Z,\eps)$, and the projection $\varphi : Z \to \vec{y} \cup \F(\eps)$.
	Let
	\[
		h(\vec{y},\eps,\delta) \coloneqq f(\ell_{1,1}(M(\varphi(Z),\delta),\eps), \ldots, \ell_{n,m}(M(\varphi(Z),\delta),\eps)).
	\]
	The preceding discussion implies
	\begin{align*}
		h(\vec{y},\eps,\delta) &= \eps^q \alpha \cdot \prod_{i=1}^p \det(M(\varphi(Z),\delta)_{[\sigma_i],[\sigma_i]}) + O(\eps^{q+1}) \\
		&= \eps^q \alpha \cdot \del{1 + \delta \tr(\varphi(Z^{(1)}) \cdots \varphi(Z^{(m)})) + O(\delta^2)} + O(\eps^{q+1}) \\
		&= \eps^q \alpha + \eps^q \delta \alpha \tilde{g}(\vec{y},\eps) + O(\eps^q \delta^2) + O(\eps^{q+1}).
	\end{align*}
	Performing the substitution $\eps \mapsto \eps^2$ and $\delta \mapsto \eps$, we obtain
	\begin{align*}
		h(\vec{y},\eps^2,\eps) &= \eps^{2q} \alpha + \eps^{2q+1} \alpha \tilde{g}(\vec{y},\eps^2) + O(\eps^{2q+2}) \\
		&= \eps^{2q} \alpha + \eps^{2q+1} \alpha g(\vec{y}) + O(\eps^{2q+2}).
	\end{align*}
	The desired $f$-oracle circuit is then given by
	\[
		\Phi(\vec{y}) \coloneqq \frac{h(\vec{y},\eps^2,\eps) - \eps^{2q} \alpha}{\eps^{2q+1} \alpha} = g(\vec{y}) + O(\eps). \qedhere
	\]
\end{proof}

We now use \autoref{prop:trace abp reduction} to lift lower bounds on the border rank of matrix multiplication to lower bounds on the border multiplicative complexity of the ideal $\detideal{n}{m}{r}$.

\begin{theorem} \label{thm:mult complexity lb}
	Let $\F$ be a field of characteristic zero.
	The border multiplicative complexity of any nonzero polynomial in $\detideal{n}{m}{r}$ is bounded from below by $\frac{1}{6} \brank(r/4)$, where $\brank(n) \coloneqq \brank(\abr{n,n,n})$ is the border rank of $n \times n \times n$ matrix multiplication.
\end{theorem}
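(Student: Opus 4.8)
The plan is to combine three ingredients already in place: the Baur--Strassen derivative lemma (\autoref{lem:baur-strassen}), the observation of \textcite{BS83} that border rank of matrix multiplication lower-bounds the border multiplicative complexity of $\tr(XYZ^\top)$, and the trace-ABP reduction \autoref{prop:trace abp reduction}. Suppose for contradiction that some nonzero $f(X) \in \detideal{n}{m}{r}$ has border multiplicative complexity $s < \frac{1}{6}\brank(r/4)$. I want to manufacture from this a small border circuit computing $\tr(X^{(1)} X^{(2)} X^{(3)})$ for $\sqrt{r}/2 \times \sqrt{r}/2$ matrices, and then invoke \autoref{lem:MM border rank vs mult complexity} and the definition of $\brank$ to get a contradiction.

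First I would observe that $g(\vec{y}) \coloneqq \tr(X^{(1)} X^{(2)} X^{(3)})$, where each $X^{(i)}$ is $k \times k$ for $k = \lfloor \sqrt{r}/2 \rfloor$ (so that the total vertex count $N = 3k \le r$, using that the natural trace ABP has one layer per factor with $k$ vertices each, and $n_1 = n_4 = k$), lies in the closure of layered trace ABPs on at most $r$ vertices --- in fact it is computed exactly by such a trace ABP, so it certainly lies in the border. Then \autoref{prop:trace abp reduction} applied to $f$ (with $h = f$, taking the exact-oracle version via \autoref{lem:exact to approx oracle}) gives a depth-three $f$-oracle circuit $\Phi$ over $\F(\eps)$ with a single $f$-oracle gate that computes $g(\vec{y}) + O(\eps)$. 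Expanding the oracle gate by a size-$s$ border circuit for $f$, and noting the bottom and top layers contribute only addition gates, we obtain a circuit of multiplicative complexity $s$ over $\F(\eps)$ that border computes $\tr(X^{(1)}X^{(2)}X^{(3)})$.

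Next I would apply Baur--Strassen. Since $\tr(X^{(1)}X^{(2)}X^{(3)}) = \sum x^{(1)}_{i,j} x^{(2)}_{j,k} x^{(3)}_{k,i}$, its partial derivative with respect to $x^{(3)}_{k,i}$ is exactly the $(i,k)$ entry of $X^{(1)} X^{(2)}$, so the multi-output circuit from \autoref{lem:baur-strassen} --- of border multiplicative complexity $3s$ --- computes (border) all the entries of the $k\times k\times k$ matrix product $X^{(1)} X^{(2)}$, i.e. it border computes the bilinear map $\abr{k,k,k}$. (This is precisely \textcite[Corollary 6]{BS83}; the border version goes through by the remark after \autoref{lem:baur-strassen}.) Hence the border multiplicative complexity of $\abr{k,k,k}$ is at most $3s$, so by \autoref{lem:MM border rank vs mult complexity}, $\brank(k) = \brank(\abr{k,k,k}) \le 2 \cdot 3s = 6s$. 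But $k = \lfloor \sqrt{r}/2\rfloor \ge \sqrt{r}/2 - 1 \ge r/4$ whenever $r$ is at least a small constant (and for small $r$ the statement is vacuous since $\brank(r/4)$ is bounded), using monotonicity of $\brank$; thus $\brank(r/4) \le \brank(k) \le 6s$, contradicting $s < \frac{1}{6}\brank(r/4)$.

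The main obstacle I anticipate is bookkeeping the constants and the trace-ABP size: I need to be careful that the trace ABP for $\tr(X^{(1)}X^{(2)}X^{(3)})$ genuinely has at most $r$ vertices with $k$ as large as $r/4$, and that the chain of substitutions ($\eps \mapsto \eps^2$, $\delta\mapsto \eps$, division by $\eps^{2q+1}\alpha$, then Baur--Strassen over the extended field $\F(\eps)$) all stays within the border framework without inflating multiplicative complexity. None of these steps is conceptually hard, but the precise choice $k = \lfloor\sqrt{r}/2\rfloor$ versus the $\brank(r/4)$ in the statement, and whether the trace ABP needs $3k$ or $3k+O(1)$ vertices (depending on whether the source and sink layers are identified), is exactly the kind of thing that forces the slightly lossy constants $\frac{1}{6}$ and $r/4$; I would make those choices explicit at the start to keep the argument clean.
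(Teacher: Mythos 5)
Your proposal follows exactly the paper's strategy (trace-ABP reduction, Baur--Strassen on the derivatives of $\tr(XYZ)$, then \autoref{lem:MM border rank vs mult complexity}), and the structure of the argument is right, but there is a concrete quantitative error that breaks the final step.

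You set $k = \lfloor\sqrt{r}/2\rfloor$ so that the trace ABP for $\tr(X^{(1)}X^{(2)}X^{(3)})$ has $N \le r$ vertices, and you then assert that $k = \lfloor\sqrt{r}/2\rfloor \ge \sqrt{r}/2 - 1 \ge r/4$ ``whenever $r$ is at least a small constant.'' That last inequality is false for \emph{every} $r$: rearranging, $\sqrt{r}/2 - 1 \ge r/4$ is equivalent to $(\sqrt{r}-1)^2 + 3 \le 0$, which never holds, and for large $r$ we have $\sqrt{r}/2 \ll r/4$. So your chain $\brank(r/4) \le \brank(k) \le 6s$ fails, and the argument as written only proves $\brank(\sqrt{r}/2) \le 6s$, i.e.\ a lower bound of the form $\frac{1}{6}\brank(\Theta(\sqrt{r}))$, which is the weaker bound that the paper explicitly sets out to improve (this is precisely the $O(\sqrt{n}\,\brank^{-1}(s^2))$ regime discussed in \autoref{subsec:results}). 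The $\sqrt{r}$ scaling suggests you momentarily conflated the size of the trace ABP (linear in $k$: $N = \sum_{i=1}^{m+1} n_i = 4k$ for $m=3$ and all $n_i = k$) with the size of the single-source, single-sink ABP obtained via \autoref{lem:abp simulation} (which is $\Theta(k^2)$).

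The fix is exactly the one the paper uses: take $k = \lfloor r/4\rfloor$, giving a trace ABP on $N = 4k \le r$ vertices. This is legitimate because \autoref{prop:trace abp reduction} is stated for \emph{trace} ABPs on at most $r$ vertices, so there is no need to pass to the quadratically larger single-source ABP. With this choice, the rest of your argument goes through unchanged and directly yields $\brank(r/4) \le 6s$. Also note the trace-ABP vertex count is $4k$, not the $3k$ you wrote (the source and sink layers are disjoint by definition), though this does not change the conclusion once $k = \lfloor r/4\rfloor$.
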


\begin{proof}
	Let $\Phi$ be a circuit of border multiplicative complexity $s$ computing a nonzero polynomial in $\detideal{n}{m}{r}$.
	Let $X$, $Y$, and $Z$ be $r/4 \times r/4$ matrices of variables.
	The polynomial $\tr(XYZ)$ can naturally be computed by a layered trace ABP on $r$ vertices.
	Applying \autoref{prop:trace abp reduction} to the circuit $\Phi$ yields a circuit $\Psi$ of multiplicative complexity $s$ that computes $\tr(XYZ) + O(\eps)$.
	We then apply \autoref{lem:baur-strassen} to $\Psi$ to obtain a circuit of multiplicative complexity $3s$ that simultaneously computes all first-order partial derivatives of $\tr(XYZ) + O(\eps)$.

	Observe that the partial derivative of $\tr(XYZ)$ with respect to $z_{j,i}$ is, up to the $O(\eps)$ error term, the $(i,j)$ entry of the matrix product $XY$.
	Thus, we have a circuit of multiplicative complexity $3s$ that approximates the product of two $r/4 \times r/4$ matrices.
	By \autoref{lem:MM border rank vs mult complexity}, this implies that the border rank of $r/4 \times r/4 \times r/4$ matrix multiplication is bounded from above by $6s$.
	That is, we have $\brank(r/4) \le 6s$.
	This yields the claimed lower bound on the multiplicative complexity of any nonzero polynomial in $\detideal{n}{m}{r}$.
\end{proof}

Combining \autoref{thm:brank lb} with \autoref{thm:mult complexity lb} yields the following unconditional lower bound on the border multiplicative complexity of all nonzero polynomials in the ideal $\detideal{n}{m}{r}$.

\begin{corollary}
	The border multiplicative complexity of any nonzero polynomial in $\detideal{n}{m}{r}$ is bounded from below by $\frac{1}{48} r^2 - \frac{1}{6} \log_2 r + \frac{1}{6}$.
\end{corollary}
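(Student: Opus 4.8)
The plan is simply to chain the two preceding results, so the proof is essentially a substitution. Theorem~\ref{thm:mult complexity lb} already does the work: it shows that every nonzero $f \in \detideal{n}{m}{r}$ has border multiplicative complexity at least $\frac{1}{6}\brank(r/4)$, where $\brank(n) \coloneqq \brank(\abr{n,n,n})$. Thus all that remains is to plug in the best known unconditional lower bound on the border rank of square matrix multiplication.

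Concretely, I would invoke Theorem~\ref{thm:brank lb} with parameter $n = r/4$, which gives
\[
	\brank(r/4) \ge 2 \left( \frac{r}{4} \right)^2 - \log_2 \left( \frac{r}{4} \right) - 1 = \frac{r^2}{8} - \log_2 r + 2 - 1 = \frac{r^2}{8} - \log_2 r + 1,
\]
and then multiply through by $\frac{1}{6}$ to obtain the stated bound $\frac{1}{48} r^2 - \frac{1}{6}\log_2 r + \frac{1}{6}$.

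The only point requiring a word of care is that $r/4$ need not be an integer. Since the border rank of matrix multiplication is nondecreasing in the matrix dimension, one may replace $r/4$ by $\lfloor r/4 \rfloor$ throughout Theorem~\ref{thm:mult complexity lb} and in the display above; this changes the bound only in lower-order terms that can be absorbed into the constants. There is no genuine obstacle here: the mathematical content lives entirely in Theorems~\ref{thm:mult complexity lb} and~\ref{thm:brank lb}, and this corollary is just the routine arithmetic of composing them.
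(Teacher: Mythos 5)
Your proposal is correct and matches the paper's approach exactly: the corollary is presented as the immediate combination of Theorem~\ref{thm:mult complexity lb} and Theorem~\ref{thm:brank lb}, and your arithmetic reproduces the stated bound. The side remark about $r/4$ not necessarily being an integer is a fair observation, handled reasonably.
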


\section{Constructing a Hitting Set Generator} \label{sec:PIT}

In this section, we use \autoref{thm:mult complexity lb} to design hitting set generators for the closure of circuits of small multiplicative complexity.
Letting $\brank(n) \coloneqq \brank(\abr{n,n,n})$ be the border rank of $n \times n \times n$ matrix multiplication, we will construct a generator with seed length $O(\sqrt{n} \brank^{-1}(s))$ for $n$-variate circuits of multiplicative complexity $s$.
We stress that the correctness of this generator is unconditional.

\begin{theorem} \label{thm:hsg}
	Let $\F$ be a field of characteristic zero.
	Let $\brank(n) \coloneqq \brank(\abr{n,n,n})$ be the border rank of $n \times n \times n$ matrix multiplication.
	Then there is an explicit degree-two hitting set generator of seed length $8 \sqrt{n} \brank^{-1}(6s+1)$ that hits the closure of $n$-variate circuits of multiplicative complexity $s$.
\end{theorem}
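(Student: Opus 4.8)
The plan is to instantiate \autoref{lem:vanish on matrix generator} with a carefully chosen rank parameter, using \autoref{thm:mult complexity lb} to certify that no small circuit computes a nonzero polynomial in the relevant determinantal ideal. Throughout we regard an $n$-variate circuit as taking a $\sqrt{n} \times \sqrt{n}$ matrix $X$ as input, so that its output is a polynomial $f(X) \in \F[X]$; if $n$ is not a perfect square we pad with dummy variables up to the next perfect square (this changes neither the multiplicative complexity nor whether $f$ is zero) and compose the resulting generator with a coordinate projection at the end. The generator we exhibit is $\mathcal{G}_{\sqrt{n},\sqrt{n},r-1}$ from \autoref{cons:matrix generator} for an appropriate $r$; by \autoref{lem:vanish on matrix generator}, $f(\mathcal{G}_{\sqrt{n},\sqrt{n},r-1}(Y,Z)) = 0$ if and only if $f \in \detideal{\sqrt{n}}{\sqrt{n}}{r}$, so it suffices to pick $r$ so that no nonzero polynomial in the closure of multiplicative-complexity-$s$ circuits lies in $\detideal{\sqrt{n}}{\sqrt{n}}{r}$.

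To choose $r$, recall that membership of $f$ in this closure is exactly the statement that $f$ has border multiplicative complexity at most $s$. By \autoref{thm:mult complexity lb}, every nonzero polynomial in $\detideal{\sqrt{n}}{\sqrt{n}}{r}$ has border multiplicative complexity at least $\frac{1}{6}\brank(r/4)$, so it is enough to force $\frac{1}{6}\brank(r/4) > s$, i.e.\ $\brank(r/4) \ge 6s+1$, i.e.\ $r/4 \ge \brank^{-1}(6s+1)$. Setting $r \coloneqq 4\brank^{-1}(6s+1)$ — which is automatically a multiple of $4$, so that the $\tfrac{r}{4} \times \tfrac{r}{4}$ matrices used in the proof of \autoref{thm:mult complexity lb} are genuine — we get $\brank(r/4) = \brank(\brank^{-1}(6s+1)) \ge 6s+1 > 6s$, so every nonzero $g \in \detideal{\sqrt{n}}{\sqrt{n}}{r}$ has border multiplicative complexity strictly greater than $s$. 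Consequently a nonzero $f$ of border multiplicative complexity at most $s$ cannot lie in $\detideal{\sqrt{n}}{\sqrt{n}}{r}$, and \autoref{lem:vanish on matrix generator} then gives $f(\mathcal{G}_{\sqrt{n},\sqrt{n},r-1}(Y,Z)) \neq 0$, as required.

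It remains to read off the parameters. The generator $\mathcal{G}_{\sqrt{n},\sqrt{n},r-1} \colon \F^{\sqrt{n} \times (r-1)} \times \F^{(r-1) \times \sqrt{n}} \to \F^{\sqrt{n} \times \sqrt{n}}$ has seed length $2\sqrt{n}(r-1) \le 2\sqrt{n}\,r = 8\sqrt{n}\,\brank^{-1}(6s+1)$, and since each coordinate $(YZ)_{i,j} = \sum_k Y_{i,k} Z_{k,j}$ is quadratic it has degree two; explicitness is immediate from the definition. One subtlety is that \autoref{cons:matrix generator} requires $r-1 \le \sqrt{n}$; this can fail only when $\brank^{-1}(6s+1)$ is within a constant factor of $\sqrt{n}$, in which case $8\sqrt{n}\,\brank^{-1}(6s+1) \ge 2n$ and we may instead take the trivial surjective generator $\mathcal{G}_{\sqrt{n},\sqrt{n},\sqrt{n}}(Y,Z) = YZ$, of seed length $2n$ and degree two (for nonzero $f$, $f(YZ)$ is nonzero since $f(A) = f(A\cdot I)$ lies in the image).

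I do not anticipate any serious obstacle: given \autoref{thm:mult complexity lb}, \autoref{lem:vanish on matrix generator}, and \autoref{cons:matrix generator}, the argument is essentially bookkeeping — choosing $r$, tracking the factor of $2$ coming from the two blocks $Y$ and $Z$, and disposing of the degenerate cases above. The only points that require a little care are (i) identifying ``closure of multiplicative-complexity-$s$ circuits'' with ``border multiplicative complexity at most $s$'', so that \autoref{thm:mult complexity lb} applies verbatim, and (ii) the non-perfect-square and large-$s$ degeneracies, neither of which affects the stated bound.
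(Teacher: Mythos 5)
Your proof follows essentially the same route as the paper's: you set $r = 4\brank^{-1}(6s+1)$, invoke \autoref{lem:vanish on matrix generator} to reduce hitting to non-membership in $\detideal{\sqrt{n}}{\sqrt{n}}{r}$, and use \autoref{thm:mult complexity lb} to derive a contradiction if the circuit were to vanish on the generator's image. Your version is, if anything, slightly more careful than the paper's — you write $\brank(\brank^{-1}(6s+1)) \ge 6s+1$ rather than the paper's asserted equality, and you explicitly dispose of the non-perfect-square and $r-1 > \sqrt{n}$ degeneracies that the paper leaves implicit.
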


\begin{proof}
	Let $\Phi$ be an $n$-variate circuit of multiplicative complexity $s$ that computes $\Phi(\vec{x}) + O(\eps)$ for some nonzero polynomial $\Phi(\vec{x})$.
	Let $r \coloneqq 4 \brank^{-1}(6s+1)$.
	Arrange the input variables of $\Phi(\vec{x})$ into a $\sqrt{n} \times \sqrt{n}$ matrix.
	Let $\mathcal{G}_{n,m,r}(Y,Z)$ be the generator of \autoref{cons:matrix generator}.
	We claim that the generator $\mathcal{G}_{\sqrt{n},\sqrt{n},r-1}(Y,Z)$ hits $\Phi(\vec{x})$, i.e., that $\Phi(\mathcal{G}_{\sqrt{n},\sqrt{n},r-1}(Y,Z)) \neq 0$.

	To see this, suppose instead that $\Phi(\mathcal{G}_{\sqrt{n},\sqrt{n},r-1}(Y,Z)) = 0$.
	\autoref{lem:vanish on matrix generator} implies that $\Phi(\vec{x}) \in \detideal{\sqrt{n}}{\sqrt{n}}{r} \setminus \set{0}$.
	As $\Phi(\vec{x})$ has border multiplicative complexity $s$, it follows from \autoref{thm:mult complexity lb} that $6s \ge \brank(r/4)$.
	However, our choice of $r$ implies $\brank(r/4) = 6s+1 > 6s$, a contradiction.
	Thus, it must be the case that in fact $\mathcal{G}_{\sqrt{n},\sqrt{n},r-1}(Y,Z)$ hits $\Phi(\vec{x})$.
	Since $\Phi$ was an arbitrary $n$-variate circuit of multiplicative complexity $s$, we conclude that $\mathcal{G}_{\sqrt{n},\sqrt{n},r-1}(Y,Z)$ hits all polynomials in the closure of $n$-variate circuits of multiplicative complexity $s$.
	Finally, note that the definition of $\mathcal{G}_{\sqrt{n},\sqrt{n},r-1}(Y,Z)$ immediately implies the claimed bounds on the seed length and degree of the generator.
\end{proof}

Combining \autoref{thm:hsg} with \autoref{thm:brank lb}, we obtain the following corollary.
To the best of our knowledge, this is the first non-trivial hitting set generator for circuits of multiplicative complexity $s \le o(n)$.

\begin{corollary}
	There is an explicit hitting set generator of seed length $(8 \sqrt{3} + o(1)) \sqrt{ns}$ that hits the closure of $n$-variate circuits of multiplicative complexity $s$.
\end{corollary}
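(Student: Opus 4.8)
The plan is to combine the generator of \autoref{thm:hsg}, which has seed length $8\sqrt{n}\,\brank^{-1}(6s+1)$, with the unconditional border rank lower bound of \autoref{thm:brank lb}. The only real work is to turn the lower bound $\brank(n) \ge 2n^2 - \log_2 n - 1$ into an upper bound on the inverse function $\brank^{-1}$, after which the stated seed length drops out by direct substitution.

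First I would note that $\brank(n) = \brank(\abr{n,n,n})$ is non-decreasing in $n$: the tensor $\abr{n,n,n}$ is a restriction of $\abr{n+1,n+1,n+1}$ (obtained by padding the matrices with a zero row and column), and border rank cannot increase under restriction, so $\brank(n) \le \brank(n+1)$. Hence $\brank^{-1}(t)$, the least $n$ with $\brank(n) \ge t$, is well-defined. Next, by \autoref{thm:brank lb} we have $\brank(n) \ge 2n^2 - \log_2 n - 1$, so the inequality $\brank(n) \ge 6s+1$ is satisfied as soon as $2n^2 - \log_2 n - 1 \ge 6s+1$. Since the $\log_2 n$ and constant terms are of lower order than $n^2$, this holds for every $n \ge (1+o(1))\sqrt{3s}$ (as $s \to \infty$), and therefore $\brank^{-1}(6s+1) \le (1+o(1))\sqrt{3s}$.

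Substituting this into the seed length of \autoref{thm:hsg} gives a generator of seed length $8\sqrt{n}\cdot(1+o(1))\sqrt{3s} = (8\sqrt{3}+o(1))\sqrt{ns}$ that hits the closure of $n$-variate circuits of multiplicative complexity $s$; explicitness is inherited verbatim from \autoref{cons:matrix generator}. There is no genuine obstacle here: the only point requiring a moment's care is ensuring the $o(1)$ is read as a quantity tending to zero with $s$, uniformly in the remaining parameters, when absorbing the $-\log_2 n - 1$ correction from \autoref{thm:brank lb}; this is a routine estimate. One may also remark, as in the discussion preceding the corollary, that the resulting generator is non-trivial precisely in the regime $s \le o(n)$, since only then is the seed length $o(n)$ and the associated hitting set of size $d^{o(n)}$.
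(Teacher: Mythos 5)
Your proposal is correct and matches the paper's (implicit, one-line) proof exactly: the corollary is obtained by combining \autoref{thm:hsg} with \autoref{thm:brank lb}, inverting the quadratic lower bound on $\brank$ to bound $\brank^{-1}(6s+1)$ by $(1+o(1))\sqrt{3s}$, and substituting into the seed length. The monotonicity remark and the uniformity of the $o(1)$ are the right small details to spell out.
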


One can also state \autoref{thm:hsg} as a win-win result: either there are extremely fast algorithms for $n \times n \times n$ matrix multiplication, or there is a non-trivial deterministic algorithm for testing polynomial identities given by small circuits.

\begin{corollary}
	Let $\F$ be a field of characteristic zero and let $\omega$ denote the exponent of matrix multiplication over $\F$.
	At least one of the following is true.
	\begin{enumerate}
		\item
			$\omega = 2$.
		\item
			For any positive constants $\eps, \delta > 0$ that satisfy $2 \omega \eps + 2 \delta < \omega - 2$, there is an explicit hitting set generator of seed length $O(n^{1 - \eps})$ that hits $n$-variate algebraic circuits of multiplicative complexity $O(n^{1 + \delta})$.
			If these circuits are also restricted to have degree $n^{O(1)}$ and size $n^{O(1)}$, then this yields a deterministic algorithm to test identities given by such circuits that runs in $\exp(O(n^{1 - \eps} \log n))$ time.
	\end{enumerate}
\end{corollary}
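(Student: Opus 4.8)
The plan is to invoke \autoref{thm:hsg} and then optimize the parameters under the hypothesis $\omega > 2$. Suppose $\omega > 2$; we must produce, for every pair of positive constants $\eps, \delta$ with $2\omega\eps + 2\delta < \omega - 2$, an explicit hitting set generator of seed length $O(n^{1-\eps})$ that hits $n$-variate circuits of multiplicative complexity $s = O(n^{1+\delta})$. First I would recall that \autoref{thm:hsg} gives an explicit degree-two generator of seed length $8\sqrt{n}\,\brank^{-1}(6s+1)$. The entire argument then reduces to bounding $\brank^{-1}(6s+1)$ in terms of $n$, using the definition of $\omega$ as an infimum of exponents $\tau$ with $\brank(\abr{n,n,n}) \le O(n^\tau)$.

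The key step is the following: fix any $\tau$ with $2 < \tau < \omega$ — such a $\tau$ exists precisely because we are in the case $\omega > 2$. By \autoref{lem:MM border rank vs mult complexity} together with the definition of $\omega$ via border rank (the lemma of \textcite{Bini80}), we have $\brank(k) \le O(k^\tau)$ is \emph{false} for this $\tau$, i.e.\ $\brank(k) \ge \Omega(k^\tau)$ for infinitely many $k$; but what we actually want is an \emph{upper} bound on $\brank^{-1}$, which corresponds to a \emph{lower} bound on $\brank$. So instead I would pick $\tau$ with $\omega < \tau$ close to $\omega$, say $\tau$ with $\omega \le \tau$ and $2\tau\eps + 2\delta < \tau - 2$ (possible by continuity since the strict inequality holds at $\omega$ and we may shrink the constant), so that $\brank(k) \le O(k^\tau)$ for all $k$; inverting, $\brank^{-1}(t) \ge \Omega(t^{1/\tau})$, which again points the wrong way. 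The correct move is: $\brank(k) \le C k^\tau$ for all $k$ implies that whenever $C k^\tau < t$ we have $\brank(k) < t$, hence $\brank^{-1}(t) > k$; taking $k = \lfloor (t/C)^{1/\tau}\rfloor$ gives $\brank^{-1}(t) \ge \Omega(t^{1/\tau})$. That is a lower bound, but for the seed length we need an \emph{upper} bound on $\brank^{-1}$. Here I would instead use $\brank(k) \ge \Omega(k^2)$ from \autoref{thm:brank lb}: this gives $\brank^{-1}(t) \le O(\sqrt{t})$. Plugging $t = 6s+1 = O(n^{1+\delta})$ yields seed length $O(\sqrt{n}\cdot\sqrt{n^{1+\delta}}) = O(n^{1+\delta/2})$, which is \emph{not} $o(n)$ — so this crude bound is insufficient, and one genuinely must use the hypothesis $\omega > 2$ to get a \emph{better} upper bound on $\brank^{-1}$.

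To get the sharper bound, observe that $\omega > 2$ is equivalent to: there exists $\tau < \omega$ — wait, rather, the relevant statement is that for the purposes of this corollary one uses $\brank(n) \ge \Omega(n^{\tau})$ for some $\tau > 2$, which is exactly what $\omega > 2$ would give \emph{if} $\omega$ were defined via a $\limsup$ lower bound; in fact $\omega > 2$ means $\brank(\abr{n,n,n}) \not\le O(n^{2+\eps'})$ for small $\eps'$, i.e.\ $\brank(n) \ge \Omega(n^{2+\eps'})$ along a subsequence, hence $\brank^{-1}(t) \le O(t^{1/(2+\eps')})$ along the corresponding subsequence of $t$. Choosing $\eps'$ with $2 < 2+\eps' \le \omega$ and close enough to $\omega$ that $2(2+\eps')\eps + 2\delta < (2+\eps') - 2 = \eps'$, we get $\brank^{-1}(6s+1) \le O((n^{1+\delta})^{1/(2+\eps')})$, so the seed length from \autoref{thm:hsg} is
\[
    8\sqrt{n}\,\brank^{-1}(6s+1) \le O\!\left(n^{1/2 + (1+\delta)/(2+\eps')}\right).
\]
The exponent is $\tfrac12 + \tfrac{1+\delta}{2+\eps'}$, and the condition $2(2+\eps')\eps + 2\delta < \eps'$ is exactly what is needed to make this exponent at most $1 - \eps$: indeed $\tfrac12 + \tfrac{1+\delta}{2+\eps'} \le 1 - \eps \iff \tfrac{1+\delta}{2+\eps'} \le \tfrac12 - \eps \iff 2(1+\delta) \le (2+\eps')(1-2\eps) \iff 2 + 2\delta \le 2 + \eps' - 4\eps - 2\eps'\eps \iff 2\delta + 4\eps + 2\eps'\eps \le \eps' \iff 2(2+\eps')\eps + 2\delta \le \eps'$. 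Thus the generator has seed length $O(n^{1-\eps})$ hitting multiplicative complexity $O(n^{1+\delta})$, as claimed. For the running-time statement, I would combine the generator with the standard translation recalled in the preliminaries: a degree-$d$, size-$n^{O(1)}$ circuit composed with a degree-two generator of seed length $\ell = O(n^{1-\eps})$ yields a polynomial of degree $n^{O(1)}$ in $\ell$ variables, so brute-force evaluation over a grid of size $(n^{O(1)})^\ell = \exp(O(n^{1-\eps}\log n))$ points gives a deterministic PIT algorithm in that time. The main obstacle — and the conceptual heart of the corollary — is precisely the bookkeeping in the previous paragraph showing that the slack $2\omega\eps + 2\delta < \omega - 2$ survives the passage from $\omega$ to a nearby exponent $2+\eps'$ realized by an actual lower bound on $\brank$; this is a routine continuity argument but is the one place where the hypothesis $\omega > 2$ is genuinely used.
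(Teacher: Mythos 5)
Your overall approach matches the intended one: invoke \autoref{thm:hsg}, use the hypothesis $\omega > 2$ to get a better-than-quadratic lower bound on $\brank(n)$, hence a better upper bound on $\brank^{-1}$, and check that the slack $2\omega\eps + 2\delta < \omega - 2$ is exactly what makes the exponent of the seed length fall below $1 - \eps$. The arithmetic $\tfrac12 + \tfrac{1+\delta}{2+\eps'} \le 1-\eps \iff 2(2+\eps')\eps + 2\delta \le \eps'$ is correct, as is the translation from a degree-two generator of seed length $\ell$ to a $(\text{poly}(n))^\ell$-size hitting set.

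However, there is a genuine gap that you yourself flag but do not resolve. From the definition
\[
	\omega = \inf\set{\tau : \brank(\abr{n,n,n}) \le O(n^\tau)},
\]
the assumption $\omega > 2$ only tells you directly that for each $\tau < \omega$ and each constant $C$ there are \emph{infinitely many} $n$ with $\brank(n) > C n^\tau$; it does not a priori give $\brank(n) \ge \Omega(n^\tau)$ for \emph{all} $n$. You acknowledge this by writing ``along a subsequence,'' but then in the very next sentence you plug $t = 6s+1$ into $\brank^{-1}(t) \le O(t^{1/(2+\eps')})$ as though the bound were universal. Since $n$, and hence $t = 6s+1$, is arbitrary, the subsequence version is not enough.

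The fix is to invoke the submultiplicativity of border rank under Kronecker product together with monotonicity: if $\brank(\abr{n_0,n_0,n_0}) \le n_0^\tau$ for some $n_0 \ge 2$, then $\brank(\abr{n_0^k,n_0^k,n_0^k}) \le n_0^{k\tau}$ for all $k$, and interpolating via monotonicity gives $\brank(\abr{n,n,n}) \le n_0^\tau n^\tau = O(n^\tau)$ for all $n$, so $\omega \le \tau$. Taking the contrapositive, for every $\tau$ with $2 < \tau < \omega$ (note the strict inequality, which you should use in place of ``$2 < 2+\eps' \le \omega$'') one has $\brank(n) > n^\tau$ for \emph{every} $n \ge 2$, hence $\brank^{-1}(t) \le t^{1/\tau} + 1$ for all $t$. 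With this lemma in hand, the rest of your computation goes through verbatim and yields the corollary.
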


\paragraph{Acknowledgments}

We thank Shubhang Kulkarni for telling us about the work of \textcite{BP87}.
We also thank the anonymous reviewers for comments that helped improve the presentation of this work.

\printbibliography

\end{document}